\theoremstyle{plain}
\newtheorem{theorem}{Theorem}
\theoremstyle{definition}
\newtheorem{definition}[theorem]{Definition}
\theoremstyle{remark}
\title{A CEGAR-like Approach for Cost LTL Bounds}
\date{}
\author[1]{Maximilien Colange}
\author[1]{Dimitri Racordon}
\author[1]{Didier Buchs}
\affil[1]{Centre Universitaire d'Informatique -- University of Geneva\\ \texttt{first.last@unige.ch}}
\newcommand{\setof}[1]{\ensuremath{\left \{ #1 \right \}}}
\newcommand{\tuple}[1]{\ensuremath{\left \langle #1 \right \rangle }}
\newcommand{\ltlg}{\ensuremath{\operatorname{\mathbf{G}}}}
\newcommand{\ltlf}{\ensuremath{\operatorname{\mathbf{F}}}}
\newcommand{\ltlx}{\ensuremath{\operatorname{\mathbf{X}}}}
\newcommand{\ltlu}{\ensuremath{\operatorname{\mathbf{U}}}}
\newcommand{\ltlr}{\ensuremath{\operatorname{\mathbf{R}}}}
\newcommand{\ltl}{\text{LTL}}
\newcommand{\cltl}{\text{LTL$^{\leq}$}}
\newcommand{\cltlbar}{\text{LTL$^{>}$}}
\newcommand{\sem}[1]{\ensuremath{\llbracket #1 \rrbracket_{\leq}}}
\newcommand{\sembar}[1]{\ensuremath{\llbracket #1 \rrbracket_{>}}}
\newcommand{\modelsinf}[0]{\ensuremath{\models_{\leq}}}
\newcommand{\modelssup}[0]{\ensuremath{\models_{>}}}
\newcommand{\lang}[1]{\ensuremath{\mathcal{L}(#1)}}
\newcommand{\increment}{\ensuremath{\mathtt{i}}}
\newcommand{\reset}{\ensuremath{\mathtt{r}}}
\newcommand{\observe}{\ensuremath{\mathtt{o}}}
\newcommand{\checkreset}{\ensuremath{\observe{}\reset{}}}
\theoremstyle{plain}
\newtheorem{proposition}[theorem]{Property}
\tikzset{
%  % colors
%  red/.style={color=red},
%  red on/.style={alt={#1{red}{}}},
%  blue/.style={color=blue},
%  blue on/.style={alt={#1{blue}{}}},
  % accepting conditions
  bacc/.style={circle,fill=black,draw=black,scale=0.5}%,
%  wacc/.style={circle,fill=white,draw=black,scale=0.5},
%  gacc/.style={circle,fill=green,draw=green,scale=0.5},
%  racc/.style={circle,fill=red,draw=red,scale=0.5},
%  todo/.style={dotted}
}
\begin{document}

% make the title area
\maketitle

\begin{abstract}
Qualitative formal verification, that seeks boolean answers about the behavior of a system, is often insufficient for practical purposes.
Observing quantitative information is of interest, e.g. for the proper calibration of a battery or a real-time scheduler.
Historically, the focus has been on quantities in a continuous domain, but recent years showed a renewed interest for discrete quantitative domains.
%For a long time, such discrete quantities were supposed to be tracked through instrumentation of the system at stake, which is an intrusive method with high inherent semantical risk.

Cost Linear Temporal Logic (CLTL) is a quantitative extension of classical LTL.
It integrates into a nice theory developed in the past few years that extends the qualitative setting, with counterparts in terms of logics, automata and algebraic structure.
We propose a practical usage of this logics for model-checking purposes.
A CLTL formula defines a function from infinite words to integers.
Finding the bounds of such a function over a given set of words can be seen as an extension of LTL universal and existential model-checking.
We propose a CEGAR-like algorithm to find these bounds by relying on classical LTL model-checking, and use B\"{u}chi automata with counters to implement it.
This method constitutes a first step towards the practical use of such a discrete quantitative logic. 
\end{abstract}

\section{Introduction}

Qualitative verification, asking questions with boolean answers about a system may be too strict for various applications.
Calibrating a battery, timing a scheduler, measuring quality of service are practical problems of systems designers for which formal verification can offer a guarantee.
Many works focus on the case of continuous quantitative domains (stochastic systems, real-time systems \dots), and the case of discrete domains have long been overlooked.

The ability to count events is an important feature, e.g. to evaluate logical time (number of actions done by a robot, number of context switches done by a scheduler \dots).
Such measurements are of primary interest to evaluate the behavior of a system at early stages of development.
Logical time can also serve as a first approximation of real-time, when events have a known bounded duration.
We seek in this paper to use a logic able to count events in a system with infinite behaviors, with a focus on applicability.
Following the automata-approach largely adopted for Linear Temporal Logic (\ltl{}) verification, we study a quantitative extension of automata able to count events.

Among numerous quantitative extensions to finite automata, we focus on the one defined by Colcombet and Boja\'{n}czyk~\cite{bojcol:2006}.
Finite automata are extended with a finite set of counters that can be incremented and reset.
A special operation \emph{observe} allows to store the current value of a counter to further determine the value of a run, as the infimum or the supremum of such stored values.
They are part of a vast theory that nicely extends the finite automata theory, with their logical and algebraic counterparts, closure properties, over finite and infinite words and finite trees.
Such automata define functions from words to integers, termed \emph{cost functions}.
Due to the undecidability of comparing two cost functions, many nice features of this theory rely on the consideration of cost functions up to an equivalence relation that “erases” the exact values and only retains their boundedness of functions.

Regarding infinite words, on which we focus in this paper, the theory of cost functions has links with other extensions of automata or logics, by bounding a discrete quantity:
bounding the maximal time between two returns to an accepting state in $\omega$-automata~\cite{almagor2010promptness}, or bounding the wait time for the \emph{finally} operator in LTL~\cite{kupferman2007liveness}.
Considering exact values to count events is nevertheless a great tool for verification.
Think for example to the maximal number of energy units consumed by a robot between two returns to its charging base, to calibrate a battery.
Or the maximal number of simultaneous threads in a parallel computation, to tune an appropriate scheduler.
Or the number of false steps permitted to a human operator before a safeguard restriction occurs.
For such properties, determining whether the bound is finite or not is of little help.
We thus propose to use the tools and methods developed towards cost function theory (over infinite words) to practical model-checking.

 We use a counting extension of LTL, Cost LTL, introduced in~\cite{kuperberg2012expressive} in the context of the theory of cost functions.
Our contribution is threefold:
\begin{compactitem}
  \item Cost LTL being an extension of LTL, we show how classical model-checking problems on LTL extend to the quantitative case.
  We thus address the problem of finding the bounds of the cost function defined by a formula, with a focus on the upper bound search.
  We propose an algorithm to compute such an upper bound using a CEGAR-like approach, where the bound is computed thanks to successive refinements.
  \item We show how this algorithm is effective, implementing it by means of $\omega$-automata with counters.
  The computational bottleneck of the algorithm is reduced to B\"{u}chi automata emptiness check, to take advantage of existing research in the field. 
  \item We also present concrete examples of application of Cost LTL, to illustrate its potential as a practical tool for verification.
\end{compactitem}

The paper is organized as follows:
Section~\ref{sec:cltl} first presents Cost LTL, introduced in~\cite{kuperberg2012expressive}, and some basic results used in the remainder of the paper.
The core of our contribution is a CEGAR-like algorithm to determine the bounds of a cost function defined by a CLTL property, in Section~\ref{sec:cegar}.
We then show in Section~\ref{sec:automata} how this algorithm can be effectively implemented thanks to $\omega$-automata equipped with counters.
Finally, Section~\ref{sec:related} presents related work, and Section~\ref{sec:conclusion} concludes our study and proposes leads for future developments. 

\subparagraph*{Notations}
Given $u \in \Sigma^{\omega}$ an infinite word over an alphabet $\Sigma$,
and $i \in \mathds{N}$, $u_i$ is the $i$-th letter of $u$, and $u^i$ the suffix of $u$ starting at $u_i$.
Thus $u = u^0$ and $u = u_0 \dots u_{i-1} u^i$ for any $i > 0$.
If $A$ is a finite set, $|A|$ denotes its cardinal.
For $A \subseteq \mathds{N}$, $\inf A$ (resp. $\sup A$) denotes the infimum (resp. supremum) of $A$.
By convention, $\inf \emptyset = + \infty$ and $\sup \emptyset = 0$.
Let a $f \in \mathds{N}^D$ for some set $D$.
For $D' \subseteq D$, the image of $D'$ by $f$ is the set $f(D') = \{ f(x) ~|~ x \in D' \}$.
We also note $\sup_{D'} f = \sup f(D')$, respectively $\inf_{D'} f = \inf f(D')$.

\section{Cost Linear Temporal Logics}
\label{sec:cltl}

We first define Cost Linear Temporal Logic (\cltl{}), as in~\cite{kuperberg2012expressive}.
Let $AP$ be a set of atomic propositions.
The set of \cltl{} formulae is defined by ($a$ ranges over $AP$):
$$\phi ::= a ~|~ \neg a ~|~ \phi \vee \phi ~|~ \phi \wedge \phi ~|~ \phi \ltlu{} \phi ~|~ \phi \ltlr{} \phi ~|~ \ltlx{} \phi ~|~ \phi \ltlu{}^{\leq} \phi$$

Every \ltl{} formula has a semantically equivalent formula in Negative Normal Form (NNF), where negations can only appear in front of an atomic proposition.
Any \ltl{} formula in NNF is a \cltl{} formula, and in that sense, \ltl{} is a strict subset of \cltl{}.
From now on, we identify \ltl{} with \ltl{} in NNF, so that $\ltl{} = \cltl{} - \{ \ltlu{}^{\leq} \}$ and $\ltl{} \subsetneq \cltl{}$.

A formula of \cltl{} is evaluated over infinite words on the alphabet $2^{AP}$.
Let $u \in (2^{AP})^{\omega}$, $n \in \mathds{N}$, $\phi_1$, $\phi_2$ be \cltl{} formulae, and $a \in AP$:
\begin{align*}
&(u,n) \modelsinf a                      &&\text{ iff } a \in u_0 \\
&(u,n) \modelsinf \neg a                 &&\text{ iff } a \notin u_0 \\
&(u,n) \modelsinf \phi_1 \vee \phi_2     &&\text{ iff } (u,n) \modelsinf \phi_1 \text{ or } (u,n) \modelsinf \phi_2         \\
&(u,n) \modelsinf \phi_1 \wedge \phi_2   &&\text{ iff } (u,n) \modelsinf \phi_1 \text{ and } (u,n) \modelsinf \phi_2        \\
&(u,n) \modelsinf \ltlx{} \phi_1         &&\text{ iff } (u^1,n) \modelsinf \phi_1                                           \\
&(u,n) \modelsinf \phi_1 \ltlu{} \phi_2  &&\text{ iff } \exists i \in \mathds{N} \text{ s.t. } (u^i,n) \modelsinf \phi_2    \\
&                                        &&\text{ and } \forall j < i, (u^j,n) \modelsinf \phi_1                            \\
&(u,n) \modelsinf \phi_1 \ltlr{} \phi_2  &&\text{ iff } \forall i \in \mathds{N} \text{ either } (u^i,n) \modelsinf \phi_2  \\
&                                        &&\text{ or } \exists j < i \text{ s.t. } (u^j,n) \modelsinf \phi_1                \\
&(u,n) \modelsinf \phi_1 \ltlu{}^{\leq} \phi_2 &&\text{ iff } \exists i \in \mathds{N} \text{ s.t. } (u^i,n) \modelsinf \phi_2 \\
&                                        &&\text{ and } |\{ j < i ~|~ (u^j,n) \not\modelsinf \phi_1 \}| \leq n
%&(u,n) \modelsinf \phi_1 \ltlu{}^{\leq} \phi_2$ iff $\exists i \in \mathds{N}$ such that $(u^i,n) \modelsinf \phi_2$ and $|\{ j < i ~|~ (u^j,n) \not\modelsinf \phi \}| \leq n$.
\end{align*}

The semantics of $\phi \in \cltl{}$ is the function\\
$$\sem{\phi}: \left. \begin{tabular}{cl}$(2^{AP})^{\omega}$ &$\mapsto \mathds{N} \cup \{ \infty \}$ \\ $u$ &$\mapsto \inf~\{ n ~|~ (u,n) \modelsinf \phi \}$ \end{tabular}\right.$$
%By convention, $\min \emptyset = \infty$.

To keep examples clear, we identify any atomic proposition $a$ with the subset of $2^{AP}$ of sets containing $a$, $\bar{a}$ being its complementary.
Consider the formula $\phi_1 = \ltlf{}^{\leq} \neg a$, short for $\bot \ltlu{}^{\leq} \neg a$.
For any $n \leq p$ and any word $u \in a^n \bar{a} (2^{AP})^\omega$, $(u, p) \modelsinf \phi_1$, and $\sem{\phi_1}(u) = n$.
Consider now the formula $\phi_2 = \ltlg{} (\ltlf{}^{\leq} \neg a)$, short for $\bot \ltlr{} \ltlf{}^{\leq} \neg a$.
For $n \in \mathds{N}$ and $u \in (2^{AP})^{\omega}$, $(u,n) \modelsinf \phi_2$ only if the distance between a letter in $\bar{a}$ and the next one also in $\bar{a}$ never exceeds $n$.
Thus $\sem{\phi_2}(u)$ is the maximal number of consecutive $a$'s in $u$.

If $\phi$ is a \ltl{} formula, either $(u,n) \modelsinf \phi$ holds for every $n$, in which case $\sem{\phi}(u) = 0$, or for none, in which case $\sem{\phi}(u) = \infty$.
The former is noted $u \vdash \phi$, and matches the usual semantics of \ltl{}.
%The semantics of $\phi$ as a \cltl{} formula is $\sem{\phi}(u) = 0$ iff $u \vdash \phi$ and $\sem{\phi}(u) = \infty$ otherwise.
In other words, the value \texttt{true} is mapped onto $0$ and \texttt{false} onto $\infty$.
%$\sem{\phi} = \begin{cases}
%  0       & \text{if } u \vdash \phi \\
%  \infty  & \text{if } u \not\vdash \phi
%\end{cases}$

From the semantical definition, for any integers $n \leq p$, if $(u,n) \modelsinf \phi$ then $(u,p) \modelsinf \phi$ too.
Stating $(u,n) \modelsinf \phi$ is thus equivalent to stating $\sem{\phi}(u) \leq n$.

For $n \in \mathds{N}$, we propose a translation from a \cltl{} formula $\phi$ to a \ltl{} formula $\phi[n]$ that separates words according to their value relatively to $n$.
More precisely, $u \vdash \phi[n]$ if, and only if, $\sem{\phi}(u) \leq n$.
$\phi[n]$ is defined inductively as follows ($\phi_1, \phi_2 \in \cltl{}$ and $a \in AP$):
\begin{compactitem}
  \item $a[n] = a$ and $(\neg a)[n] = \neg a$;
  \item $(\ltlx{} \phi_1)[n] = \ltlx{}(\phi_1[n])$ 
  \item $(\phi_1 \bowtie \phi_2)[n] = \phi_1[n] \bowtie \phi_2[n]$ for $\bowtie \in \{ \vee, \wedge, \ltlu{}, \ltlr{} \}$;
  \item for $\phi_1, \phi_2 \in \ltl{}$, $(\phi_1 \ltlu{}^{\leq} \phi_2)[0] = \phi_1 \ltlu{} \phi_2$\\and $(\phi_1 \ltlu{}^{\leq} \phi_2)[n+1] = (\phi_1 \vee \ltlx{}(\phi_1 \ltlu{}^{\leq} \phi_2)[n]) \ltlu{} \phi_2$;
%  \item for $\phi_1, \phi_2 \in \ltl{}$, $(\phi_1 \ltlu{}^{\leq} \phi_2)[-1] = \bot$ and $(\phi_1 \ltlu{}^{\leq} \phi_2)[n+1] = (\phi_1 \vee \ltlx{}(\phi_1 \ltlu{}^{\leq} \phi_2)[n]) \ltlu{} \phi_2$;
  \item otherwise $(\phi_1 \ltlu{}^{\leq} \phi_2)[n] = (\phi_1[n] \ltlu{}^{\leq} \phi_2[n])[n]$.
\end{compactitem}

Back to our example $\phi_1 = \ltlf{}^{\leq} \neg a$, we have $\phi_1[0] = \bot \ltlu{}  \neg a = \neg a$, hence $\phi_1[1] = (\ltlx{} \neg a) \ltlu{} \neg a$, equivalent to $\neg a \vee \ltlx{} \neg a$.
Thus, $\phi_1[n] = \vee_{i=0}^n X^i \neg a$ for every $n$.

\begin{proposition}
\label{prop:unfolding}
For $u \in (2^{AP})^{\omega}$, $n \in \mathds{N}$ and $\phi \in \cltl{}$, $u \vdash \phi[n]$ iff $\sem{\phi}(u) \leq n$.
\end{proposition}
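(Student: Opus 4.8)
The plan is to prove the equivalence $u \vdash \phi[n] \iff \sem{\phi}(u) \le n$ by structural induction on the \cltl{} formula $\phi$, using the characterization recorded just above the statement: $\sem{\phi}(u) \le n$ is equivalent to $(u,n) \modelsinf \phi$. So the real content is to show, for every $u$ and $n$, that $u \vdash \phi[n]$ iff $(u,n)\modelsinf \phi$; here $u\vdash\psi$ denotes ordinary \ltl{} satisfaction of the \ltl{} formula $\psi = \phi[n]$, so on the left-hand side $n$ plays no semantic role beyond fixing which \ltl{} formula we test. The induction is simultaneous over all $n\in\mathds{N}$, since the clause for $\ltlu{}^{\leq}$ at level $n+1$ refers to level $n$.

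First I would dispatch the base cases $\phi = a$ and $\phi = \neg a$: here $\phi[n]=\phi$, and $(u,n)\modelsinf a \iff a\in u_0 \iff u\vdash a$, and similarly for $\neg a$; note this does not depend on $n$, matching the fact that $\sem{a}(u)\in\{0,\infty\}$. Next, the Boolean and temporal connectives $\vee,\wedge,\ltlu{},\ltlr{},\ltlx{}$: since $\phi[n]$ is defined to commute with each of these operators, and both $\modelsinf$ (at fixed $n$) and ordinary \ltl{} satisfaction are defined by the identical recursive clauses over these connectives, the inductive step is a routine unwinding — e.g. for $\phi_1\ltlu{}\phi_2$, $(u,n)\modelsinf\phi_1\ltlu{}\phi_2$ iff there is $i$ with $(u^i,n)\modelsinf\phi_2$ and $(u^j,n)\modelsinf\phi_1$ for all $j<i$, which by the induction hypothesis (applied to the suffixes $u^i, u^j$, at the same $n$) is iff $u^i\vdash\phi_2[n]$ and $u^j\vdash\phi_1[n]$ for $j<i$, i.e. $u\vdash\phi_1[n]\ltlu{}\phi_2[n] = (\phi_1\ltlu{}\phi_2)[n]$. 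The one subtlety to flag is that the induction hypothesis must be available at arbitrary suffixes $u^i$, not just at $u$ itself, so the statement should be proved ``for all $u$'' from the start.

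The crux is the case $\phi = \phi_1\ltlu{}^{\leq}\phi_2$, and it splits as in the definition of $\phi[n]$. When $\phi_1,\phi_2\in\ltl{}$: for $n=0$, $(u,0)\modelsinf\phi_1\ltlu{}^{\leq}\phi_2$ says there is $i$ with $(u^i,0)\modelsinf\phi_2$ and $|\{j<i \mid (u^j,0)\not\modelsinf\phi_1\}|\le 0$, i.e. $(u^j,0)\modelsinf\phi_1$ for all $j<i$; since $\phi_1,\phi_2$ are \ltl{} this is exactly $u\vdash\phi_1\ltlu{}\phi_2 = (\phi_1\ltlu{}^{\leq}\phi_2)[0]$. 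For the step from $n$ to $n+1$, the key observation is that one ``error'' (a position where $\phi_1$ fails) consumes one unit of budget, so a word has an $\ltlu{}^{\leq}$-witness within budget $n+1$ iff along the path to the witness, every position either satisfies $\phi_1$ or is a position from which the remaining suffix admits an $\ltlu{}^{\leq}$-witness within budget $n$ — which is precisely what $(\phi_1\vee\ltlx{}((\phi_1\ltlu{}^{\leq}\phi_2)[n]))\,\ltlu{}\,\phi_2$ encodes, modulo the bookkeeping that the first error ``pays'' and hands the tail a budget of $n$. I would prove both directions of this by extracting/reconstructing the witness index $i$ and splitting at the first position $j<i$ where $\phi_1$ fails (if none exists, $\phi_1\ltlu{}\phi_2$ already holds and we are trivially within any budget); then the suffix $u^{j+1}$ must carry the remaining $\le n$ errors, so the inner induction hypothesis on $n$ applies to $(\phi_1\ltlu{}^{\leq}\phi_2)[n]$ at $u^{j+1}$, and the $\ltlx{}$ shifts it to position $j$. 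Finally, in the general case where $\phi_1$ or $\phi_2$ contains a nested $\ltlu{}^{\leq}$, we use $(\phi_1\ltlu{}^{\leq}\phi_2)[n] = (\phi_1[n]\ltlu{}^{\leq}\phi_2[n])[n]$: by the structural induction hypothesis, $u^k\vdash\phi_1[n]\iff(u^k,n)\modelsinf\phi_1$ and likewise for $\phi_2$, so $\phi_1[n],\phi_2[n]$ define the same ``$\le n$'' membership as $\phi_1,\phi_2$; hence $(u,n)\modelsinf\phi_1\ltlu{}^{\leq}\phi_2 \iff (u,n)\modelsinf\phi_1[n]\ltlu{}^{\leq}\phi_2[n]$, and now $\phi_1[n],\phi_2[n]\in\ltl{}$, so the already-proved \ltl{}-case gives $\iff u\vdash(\phi_1[n]\ltlu{}^{\leq}\phi_2[n])[n]$. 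The main obstacle is getting the budget accounting in the $\ltlu{}^{\leq}$ induction exactly right — in particular making sure the ``$\le n$'' constraint transfers cleanly across the single $\ltlx{}$ shift and that the edge case of zero errors is handled — but this is bookkeeping rather than a genuine difficulty.
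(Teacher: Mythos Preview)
Your proposal is correct and follows essentially the same architecture as the paper's proof: structural induction on $\phi$, with the $\ltlu{}^{\leq}$ case handled by first reducing general operands to \ltl{} operands via the outer clause $(\phi_1\ltlu{}^{\leq}\phi_2)[n]=(\phi_1[n]\ltlu{}^{\leq}\phi_2[n])[n]$, and then, for \ltl{} operands, an inner induction on $n$. Your treatment of the $n\to n+1$ step (split at the first position where $\phi_1$ fails, hand the tail a budget of $n$, apply the inner hypothesis to $(\phi_1\ltlu{}^{\leq}\phi_2)[n]$ at $u^{j+1}$) is in fact a cleaner and more direct argument than the paper's version of that same step, but the overall strategy is the same.
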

\begin{proof}
Structural induction on $\phi$, detailed in appendix.
\end{proof}

\subsection{Dual Logics}

In \cltl{}, negations can only appear in the leaves of the formula, so that a formula is always in NNF.
This particularity is commanded by the semantical difficulty to negate the operator $\ltlu{}^{\leq}$.
In a boolean setting, a word is either a model of the formula, or it is not.
In our quantitative setting, negation is not straightforward, as it is not a natural operation over $\mathds{N}$.
We take inspiration from the embedment of \ltl{} in \cltl{}: \texttt{true} corresponds to $0$ and \texttt{false} to $\infty$.
Semantically, the negation thus replaces $\inf$ with $\sup$.
We define the logic \cltlbar{}~~\cite{kuperberg2014linear}, dual to \cltl{}.
The operator $\ltlu{}^{\leq}$ is replaced by $\ltlr{}^{>}$ whose semantics is defined so as to match the negation of $\ltlu{}^{\leq}$ semantics:
$(u,n) \modelssup \phi_1 \ltlr{}^{>} \phi_2$ iff for every $i \in \mathds{N}$, either $(u^i,n) \modelssup \phi_2$ or $|\{ j < i ~|~ (u^j,n) \modelssup \phi_1 \}| > n$.
All other operators keep their natural semantics.
The semantics of $\phi \in \cltlbar{}$ is a function 
$$\sembar{\phi}: \left. \begin{tabular}{cl}$(2^{AP})^{\omega}$ &$\mapsto \mathds{N} \cup \{ \infty \}$ \\ $u$ &$\mapsto \sup~\{ n ~|~ (u,n) \modelssup \phi \}$ \end{tabular}\right.$$
For any $n \leq p$, if $(u,p) \modelssup \phi$ then $(u,n) \modelssup \phi$, so that $(u,n) \modelssup \phi$ iff $\sembar{\phi}(u) \geq n$.

Note that \ltl{} is also embedded in \cltlbar{}, with a semantics dual to the case of \cltl{}: \texttt{true} is mapped onto $\infty$ and \texttt{false} onto $0$.
Syntactically, $\ltl{} = \cltl{} \cap \cltlbar{}$, but the semantics do not match.
Note that in both cases (\cltl{} and \cltlbar{}), either $(u,n)$ is a model for $\phi \in \ltl{}$ for every $n \in \mathds{N}$, or for none.
From now on, we note $u \vdash \phi$ for the former case, to be matched with the appropriate semantics depending on context.

Syntactically, we get dual pairs of operators: $\vee$ and $\wedge$, $\ltlu{}$ and $\ltlr{}$, $\ltlu{}^{\leq}$ and $\ltlr{}^{>}$, $\ltlx{}$ being self-dual.
From a formula $\phi \in \cltl{}$ (resp. \cltlbar{}), we can build a formula $\neg \phi \in \cltlbar{}$ (resp. \cltl{}), by pushing the negation to the leaves: the top operator is replaced by its dual, and the negation is recursively pushed to the leaves of its operands.
Literals respect the excluded middle, which is semantically consistent, so as to eliminate double negations.
Syntactically, the excluded middle also holds, as pushing negations to the leaves in $\neg \neg \phi$ yields $\phi$.
Observe how the semantics of $\phi$ and $\neg \phi$ are correlated:

\begin{proposition}
For $u \in (2^{AP})^{\omega}$ and $\phi \in \cltl{}$, $\sembar{\neg \phi}(u) = \max (0, \sem{\phi}(u)-1)$.
\end{proposition}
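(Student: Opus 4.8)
The plan is to reduce the statement to a single structural fact relating the two satisfaction relations: for every $\phi \in \cltl{}$, every $u \in (2^{AP})^{\omega}$ and every $n \in \mathds{N}$,
\[
  (u,n) \modelssup \neg\phi \quad\Longleftrightarrow\quad (u,n) \not\modelsinf \phi .
\]
Call this equivalence $(\ast)$. It says that the $>$-semantics of the dual formula is exactly the pointwise negation of the $\leq$-semantics of the original one; once it is established, the statement follows from bookkeeping with the conventions $\inf\emptyset = +\infty$ and $\sup\emptyset = 0$.

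First I would prove $(\ast)$ by structural induction on $\phi$, i.e. along the construction of $\neg\phi$ by pushing negations to the leaves. For literals it is the definition of $\modelsinf$ and $\modelssup$ on atomic propositions, together with the excluded middle for literals (so that $\neg\neg a$ is identified with $a$). For the pairs $\vee/\wedge$ and $\ltlu{}/\ltlr{}$, the dual connective replaces conjunction by disjunction and the existential quantifier over suffix positions by a universal one (or vice versa), exactly as de Morgan prescribes; one then applies the induction hypothesis to the operands. For $\ltlx{}$ the claim is immediate, $\ltlx{}$ being self-dual. The only case involving the counter is $\phi_1 \ltlu{}^{\leq} \phi_2$, whose dual is $\neg\phi_1 \ltlr{}^{>} \neg\phi_2$: negating ``$\exists i$ with $(u^i,n)\modelsinf\phi_2$ and $|\{ j<i : (u^j,n)\not\modelsinf\phi_1 \}| \leq n$'' yields ``$\forall i$, $(u^i,n)\not\modelsinf\phi_2$ or $|\{ j<i : (u^j,n)\not\modelsinf\phi_1 \}| > n$'', and the induction hypothesis rewrites each occurrence of $\not\modelsinf$ as $\modelssup$ of the dual, which is precisely the stated semantics of $\ltlr{}^{>}$. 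Since $\ltlr{}^{>}$ was defined to match the negation of $\ltlu{}^{\leq}$, this case closes by design.

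Next I would unwind the definitions using $(\ast)$. Recall from the text that $(u,n)\modelsinf\phi$ holds iff $\sem{\phi}(u) \leq n$, so the set $\{ n \in \mathds{N} : (u,n)\modelsinf\phi \}$ is the up-set $\{ n : n \geq \sem{\phi}(u) \}$; hence by $(\ast)$,
\[
  \sembar{\neg\phi}(u) \;=\; \sup\{ n \in \mathds{N} : (u,n)\modelssup\neg\phi \} \;=\; \sup\{ n \in \mathds{N} : n < \sem{\phi}(u) \} .
\]
It remains to evaluate this supremum in three cases. If $\sem{\phi}(u) = 0$, the set is empty and the supremum is $0$ by convention; if $0 < \sem{\phi}(u) < \infty$, the set is $\{0,\dots,\sem{\phi}(u)-1\}$ and the supremum is $\sem{\phi}(u)-1$; if $\sem{\phi}(u) = \infty$, the set is all of $\mathds{N}$ and the supremum is $\infty$. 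In each case the value equals $\max(0, \sem{\phi}(u)-1)$, which is the claim.

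I do not expect a real obstacle here: the only delicate bit is to check in the $\ltlu{}^{\leq}$ case that the threshold negates correctly (``$\leq n$'' to ``$> n$''), which is exactly how $\ltlr{}^{>}$ is defined, so nothing has to be worked out there. The ``$-1$'' in the statement is not a subtlety of $(\ast)$ but the effect of passing from an infimum over an up-set to a supremum over its complement, and the $\max$ with $0$ simply absorbs the degenerate case $\sem{\phi}(u)=0$ through the convention $\sup\emptyset = 0$.
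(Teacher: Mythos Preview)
Your proposal is correct and follows essentially the same approach as the paper: establish by structural induction that $(u,n)\modelssup\neg\phi$ iff $(u,n)\not\modelsinf\phi$, then use monotonicity to rewrite $\sembar{\neg\phi}(u)$ as $\sup\{n : n < \sem{\phi}(u)\}$ and evaluate. Your write-up is in fact more explicit than the paper's, which leaves the induction to the reader and only mentions the $\sem{\phi}(u)=0$ boundary case.
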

\begin{proof}
An easy induction on $\phi$ proves that $(u,n) \modelsinf \phi$ iff $(u,n) \not\modelssup \neg \phi$ (observe how the semantics of dual operators are dual to each other).
We recall that $(u,n) \modelsinf \phi \implies (u,p) \modelsinf \phi$ for any $n \leq p$.
Thus $(u,n) \modelssup \neg \phi$ if, and only if, $n < \sem{\phi}(u) = \inf~\{ p ~|~ (u,p) \modelsinf \phi \}$.
Furthermore, if $\sem{\phi}(u) = 0$, then $\sembar{\phi}(u) = \sup~\emptyset = 0$ by convention.
%
%The sole difficulty is when $\sem{\phi}(u) = 0$.
%In this case $\sembar{\phi}(u) = \sup~\emptyset$, which is also $0$ by convention.
\end{proof}

\begin{proposition}
For any $u \in (2^{AP})^{\omega}$ and $\phi \in \cltlbar{}$, $$\sem{\neg \phi}(u) = \begin{cases}
\sembar{\phi}(u) + 1  ~~ \text{if } \sembar{\phi}(u) > 0 \\
0                     ~~ \text{if } \forall n \in \mathds{N}. (u,n) \not\modelssup \phi \\
1                     ~~ \text{otherwise}
\end{cases}$$
\end{proposition}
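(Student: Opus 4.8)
The plan is to dualise the argument behind the preceding proposition. The first step I would take is to record the symmetric semantic correspondence: for every $\psi \in \cltlbar{}$ and every $n \in \mathds{N}$, $(u,n) \modelssup \psi$ holds if and only if $(u,n) \not\modelsinf \neg\psi$. This follows immediately from the induction already invoked above — apply it to $\neg\psi \in \cltl{}$ to get $(u,n)\modelsinf\neg\psi$ iff $(u,n)\not\modelssup\neg\neg\psi$, then use the syntactic excluded middle $\neg\neg\psi=\psi$. (Alternatively one redoes the same short structural induction, checking that each clause for a $\cltlbar{}$ operator, in particular $\ltlr{}^{>}$, is the negation of the clause for its $\cltl{}$ dual.)

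With $\psi := \phi$ this identifies the set $\{\, n \mid (u,n)\modelsinf\neg\phi \,\}$ with $\mathds{N}\setminus S$, where $S := \{\, n \in \mathds{N} \mid (u,n)\modelssup\phi \,\}$; hence $\sem{\neg\phi}(u) = \inf(\mathds{N}\setminus S)$. Next I would use the monotonicity already stated in the excerpt: $S$ is downward closed and $\sup S = \sembar{\phi}(u)$, so $S$ is forced into one of exactly four shapes — $\emptyset$; $\{0\}$; $\{0,\dots,m\}$ with $m = \sembar{\phi}(u)\ge 1$; or $\mathds{N}$ (in which case $\sembar{\phi}(u)=\infty$). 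A short case check then gives $\inf(\mathds{N}\setminus S)$ equal to $0$, to $1$, to $m+1 = \sembar{\phi}(u)+1$, and to $\inf\emptyset = +\infty = \sembar{\phi}(u)+1$ respectively, which is precisely the three-way split in the statement: the condition "$\sembar{\phi}(u)>0$" collects the third and fourth shapes, "$\forall n.\,(u,n)\not\modelssup\phi$" is the first, and "otherwise" is the second.

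I expect the only real obstacle to be the boundary bookkeeping around the convention $\sup\emptyset = 0$: a value $\sembar{\phi}(u)=0$ does not by itself pin down $S$, since it arises both when $S=\emptyset$ and when $S=\{0\}$, and these two cases yield the different answers $0$ and $1$. Keeping them apart is exactly what forces three cases instead of the single formula $\sem{\neg\phi}(u) = \sembar{\phi}(u)+1$; once the correspondence of the first step is in hand, nothing else is more than routine.
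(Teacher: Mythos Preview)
Your proposal is correct and follows essentially the same approach as the paper's own proof: establish the duality $(u,n)\modelssup\phi$ iff $(u,n)\not\modelsinf\neg\phi$, then read off $\sem{\neg\phi}(u)$ from the shape of the downward-closed set $S=\{n\mid(u,n)\modelssup\phi\}$, splitting the boundary case $\sembar{\phi}(u)=0$ into $S=\emptyset$ versus $S=\{0\}$. Your write-up is more explicit (in particular you spell out the $S=\mathds{N}$ case and how the correspondence follows from $\neg\neg\psi=\psi$), but the argument is the same.
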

\begin{proof}
Recall from above that $(u,n) \modelssup \phi$ if, and only if, $(u,n) \not\modelssup \neg \phi$.
If $\sembar{\phi}(u) > 0$, then $(u,n) \modelsinf \phi$ if, and only if, $n > \sembar{\phi}(u)$.
If $\sembar{\phi}(u) = 0$, $\{ n | (u,n) \modelssup \phi \}$ is either $\emptyset$ or $\{0\}$.
We get $\sem{\neg \phi}(u) = 0$ in the former case, and $\sem{\neg \phi}(u) = 1$ in the latter.
\end{proof}

Following the above, we define, for $\phi \in \cltlbar{}$ and $n \in \mathds{N}$, $\phi[n]$ as $\neg ((\neg \phi)[n])$.
\begin{proposition}
\label{prop:sup_instantiation}
For $u \in (2^{AP})^{\omega}$, $n \in \mathds{N}$ and $\phi \in \cltlbar{}$, $u \vdash \phi[n]$ iff $\sembar{\phi}(u) \geq n$.
\end{proposition}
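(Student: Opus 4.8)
The plan is to reduce this statement to the two results already established: Proposition~\ref{prop:unfolding} (the $u \vdash \psi[n]$ iff $\sem{\psi}(u) \leq n$ characterization for $\psi \in \cltl{}$) and the preceding property relating $\sembar{\phi}$ and $\sem{\neg\phi}$. By definition, for $\phi \in \cltlbar{}$ we set $\phi[n] = \neg((\neg\phi)[n])$, where $\neg\phi \in \cltl{}$. First I would observe that $(\neg\phi)[n]$ is an \ltl{} formula (the unfolding translation always produces an \ltl{} formula, since it eliminates all occurrences of $\ltlu{}^{\leq}$), and hence so is its negation $\neg((\neg\phi)[n])$, which means $u \vdash \phi[n]$ is well-defined in the ordinary \ltl{} sense and is equivalent to $u \not\vdash (\neg\phi)[n]$ by the excluded middle for \ltl{} (pushing negations to the leaves of an \ltl{} formula yields a semantically equivalent \ltl{} formula, and negating again returns the original).

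The core of the argument is then a chain of equivalences. By the remark just made, $u \vdash \phi[n]$ iff $u \not\vdash (\neg\phi)[n]$. Applying Proposition~\ref{prop:unfolding} to $\neg\phi \in \cltl{}$, we have $u \vdash (\neg\phi)[n]$ iff $\sem{\neg\phi}(u) \leq n$, so $u \vdash \phi[n]$ iff $\sem{\neg\phi}(u) > n$. It now remains to check that $\sem{\neg\phi}(u) > n$ iff $\sembar{\phi}(u) \geq n$, using the formula for $\sem{\neg\phi}(u)$ from the previous property. I would split into the three cases of that formula: if $\sembar{\phi}(u) > 0$ then $\sem{\neg\phi}(u) = \sembar{\phi}(u) + 1 > n$ iff $\sembar{\phi}(u) \geq n$; if $(u,m)\not\modelssup\phi$ for all $m$, then $\sembar{\phi}(u) = \sup\emptyset = 0$ and $\sem{\neg\phi}(u) = 0$, so $\sem{\neg\phi}(u) > n$ iff $n < 0$ iff $n \leq -1$, which is never true for $n \in \mathds{N}$, matching $\sembar{\phi}(u) \geq n$ being false (as $0 \geq n$ only for $n = 0$ — wait, this needs care); and in the remaining case $\sem{\neg\phi}(u) = 1$ and $\sembar{\phi}(u) = 0$, so $\sem{\neg\phi}(u) > n$ iff $n = 0$ iff $\sembar{\phi}(u) \geq n$.

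The delicate point — and the one I expect to require the most care — is the boundary behavior at $n = 0$ and the interaction between the convention $\sup\emptyset = 0$ and the three-way case split. Specifically, when $\sembar{\phi}(u) = 0$ there are genuinely two sub-cases ($\{m \mid (u,m)\modelssup\phi\}$ equal to $\emptyset$ versus equal to $\{0\}$), and only in the second does $u \vdash \phi[0]$ hold; one must verify that the definition $\phi[0] = \neg((\neg\phi)[0])$ correctly distinguishes them, which traces back through $\sem{\neg\phi}(u) \in \{0,1\}$ and ultimately to Proposition~\ref{prop:unfolding}. I would therefore present the $n=0$ case explicitly rather than folding it into the generic inequality manipulation, and double-check it against the convention $\sembar{\phi}(u) \geq 0$ always holding. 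Once the case analysis is laid out cleanly, each case is a one-line arithmetic comparison, so no lengthy computation is involved.
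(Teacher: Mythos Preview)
Your approach is sound but takes a longer route than the paper. The paper's proof is a single chain of equivalences: $\sembar{\phi}(u) \geq n$ iff $(u,n) \modelssup \phi$ iff $(u,n) \not\modelsinf \neg\phi$ iff $u \not\vdash (\neg\phi)[n]$ iff $u \vdash \phi[n]$. The shortcut is the middle step, which invokes the duality $(u,n)\modelssup\phi \Leftrightarrow (u,n)\not\modelsinf\neg\phi$ established by structural induction inside the proof of Property~2, working directly at the level of the satisfaction relations. You instead route through the three-case formula for $\sem{\neg\phi}(u)$ (Property~3), which obliges you to do a case split on the numeric value; the paper avoids that detour entirely by never converting back and forth between the relation and the supremum. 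Your instinct that the $n=0$ boundary deserves scrutiny is nonetheless well-placed: the paper's first equivalence, $\sembar{\phi}(u)\geq n \Leftrightarrow (u,n)\modelssup\phi$, relies on the earlier textual claim and is in fact delicate at $n=0$ when $\{m\mid (u,m)\modelssup\phi\}=\emptyset$, since then $\sembar{\phi}(u)=\sup\emptyset=0\geq 0$ yet $(u,0)\not\modelssup\phi$. The paper's one-liner silently absorbs this edge case, whereas your case analysis surfaces it; that is a feature of your write-up, not a defect, even if the resulting argument is less economical.
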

\begin{proof}
$\sembar{\phi}(u) \geq n$ iff $(u,n) \modelssup \phi$ iff $(u,n) \not\modelsinf \neg \phi$ iff $u \not\vdash (\neg \phi)[n]$ iff $u \vdash \phi[n]$.
\end{proof}

\subsection{Cost Logics for Verification}

A common task in verification is whether a system has a behavior satisfying a given property.
The property either expresses a desired behavior, or an unwanted one (in which case finding a satisfying behavior amounts to finding a bug).
Typically, the behaviors of the system is a regular $\omega$-language $L$, and the property $\phi$ is a \ltl{} property.
Thanks to the closure properties of regular $\omega$-languages, this problem reduces to \emph{existential model-checking}: is the intersection of $L$ and the language recognized by $\phi$ empty?
\emph{Universal model-checking} asks whether a language $L'$ contains all $\omega$-words, and is dual to existential model-checking, since $L' = \Sigma^{\omega}$ iff $\Sigma^{\omega} - L' = \emptyset$.
As both \cltl{} and \cltlbar{} extend \ltl{}, the natural question we address is the extension of these two problems to the quantitative setting.

We first rephrase the \ltl{} existential and universal model-checking with the \cltl{} semantics: existential model-checking asks whether there is a word of value $0$.
Dually, universal model-checking asks whether all words have value $\infty$.
Existence of a word of value equal to, greater than, or less than a given $n$ are natural extensions of this question.
%We leave equality aside, as it can be rephrased as the conjunction of comparison questions.
These questions hardly extend the boolean framework: by comparing word values against a \emph{given} $n$, they remain boolean questions.
We seek here a question with a \emph{quantitative} answer (in our case in the domain $\mathds{N} \cup \{ \infty \}$).
Two particular values of interest are the bounds of $\sem{\phi}$.

\begin{definition}
\emph{$\inf$-bound checking}: given $L \subseteq (2^{AP})^{\omega}$ and $\phi \in \cltl{}$, compute $\inf_L \sem{\phi}$.
\end{definition}

\begin{definition}
\emph{$\sup$-bound checking}:given $L \subseteq (2^{AP})^{\omega}$ and $\phi \in \cltl{}$, compute $\sup_L \sem{\phi}$.
\end{definition}

The duality of \cltl{} and \cltlbar{} allows to choose from $\phi$ or $\neg \phi$, as \cltlbar{} seems more appropriate for $\sup$-bound checking.
All the problems mentioned above are reducible to these two problems.
In \cltl{} semantics, existential \ltl{} model-checking boils down to $\inf$-bound checking, while its universal counter-part corresponds to $\sup$-bound checking.

\section{CLTL Bounds Checking}
\label{sec:cegar}

This section presents our main contribution, an algorithm to compute bounds for \cltl{} and \cltlbar{} formulae.% based on successive refinements.
It is inspired by the Counter-Examples Guided Abstraction and Refinement (CEGAR) approach to qualitative model-checking, which we present first.

\subsection{The CEGAR Approach to qualitative model-checking}

%A common approach used in universal \ltl{} model-checking is the so-called CEGAR (Counter-Example Guided Abstraction and Refinement) loop.
Consider a regular $\omega$-language $L$ (the set of behaviors of a system) and a \ltl{} formula $\phi$.
We ask whether all words of $L$ are models of $\phi$.
It boils down to existential model-checking: is there a model of $\neg \phi$ in $L$?
The language of $\neg \phi$, $\lang{\neg \phi}$ is in deed regular, the intersection of $\lang{\neg \phi}$ and $L$ computable, and testing the emptiness of a regular $\omega$-language is decidable.
These steps are usually performed with $\omega$-automata to represent the regular $\omega$-languages.

But this approach becomes hardly tractable when the underlying automata are huge, as it is often the case when the input language is the set of behaviors of a concurrent system.
%Several approaches to tackle this obstacle have been proposed, and it is not the scope of this paper to cover them.
We present the so-called CEGAR (Counter-Example Guided Abstraction and Refinement) loop.
A language $L'$ larger (for the inclusion) than $L$ is called an \emph{abstraction}, $L$ being a \emph{refinement} of $L'$.
CEGAR loop assumes the existence of a refinement function $\rho$ that given a word $u \notin L$ and a regular abstraction $M$ of $L$, returns a regular refinement $M'$ of $M$ that does not contain $u$, and that is also an abstraction of $L$:
$L \subseteq M \text{ and } u \notin L \Rightarrow L \subseteq \rho(M,u) \subseteq M - \{u\}$.

The CEGAR loop proceeds as follows:
i) start from an abstraction $M$ of $L$;
ii) search in $M$ a model $u$ for $\neg \phi$;
iii) if there is no such $u$ in $M$, there is none in $L$ either, and the question is settled;
iv) otherwise, check whether $u \in L$;
v) if $u \in L$, then the question is settled;
vi) else start over with $M = \rho(M,u)$.
%\begin{compactitem}
%  \item start from an abstraction $M$ of $L$;
%  \item search in $M$ a model $u$ for $\neg \phi$
%  \item if there is no such $u$ in $M$, there is none in $L$ either, and the question is settled
%  \item otherwise, check whether $u \in L$
%  \item if $u \in L$, then the question is settled
%  \item otherwise start over with $M = \rho(M,u)$
%\end{compactitem}
In practice, the automaton for $L$ is huge, and CEGAR avoids its full exploration by manipulating abstractions, that have smaller underlying automata.
Since a counter-example guides the refinement, the same spurious counter-example cannot be encountered twice.
In general, termination depends on the refinement function $\rho$, but practically, termination is easy to ensure, for example by falling back to the initial input $L$ (worst-case scenario) when the size of $M'$ exceeds the size of $L$.

\subsection{CEGAR Approach for Bounds Checking}

We adapt the CEGAR approach to solve $\sup$-bound checking for a \cltlbar{} formula $\phi$ over a regular $\omega$-language $L$.
The dual $\inf$-bound checking follows the same scheme.

We first have to define the notion of abstractions and refinements, thanks to an ordering over semantic functions: smaller elements are refinements and greater ones are abstractions.

%\begin{numdef}
%$f \leq g$ iff $\forall~u \in \Sigma^{\omega}. f(u) \leq g(u)$
%\end{numdef}
%Although fancy, this definition is of little use, as checking whether $f \leq g$ is indecidable~\cite{krob1994equality}.
%
%Since we are only interested in suprema, a much weaker version suffices for our purpose.
%
%\begin{numdef}
%$f \preceq g$ iff $\sup f \leq \sup g$
%\end{numdef}

\begin{definition}
\label{def:refinement}
For $L \subseteq \Sigma^{\omega}$, $f \preceq_L g$ iff $\sup_L f = \sup_L g$ and $g^{-1}(0) \subseteq f^{-1}(0)$.
\end{definition}

\SetKwFunction{ComputeBound}{ComputeBound}

\noindent
\begin{minipage}{0.5\textwidth}
\vspace{0pt}
\begin{algorithm}[H]
%\Fn{ComputeBound $L$, $\phi_0$}{
  $n \gets 0$ \;
  \While{true}{
    $\phi \gets \phi_0 \wedge \phi_0[n+1]$\label{alg:phi0} \;
    \eIf{$\exists~u \in L$ s.t. $\sembar{\phi}(u) > 0$\label{alg:keypoint}}{
      $n \gets p$ for any $n < p \leq \sembar{\phi}(u)$\label{alg:updaten} \;
    }{
      \Return $n$\;
    }
  }
%}
\caption{ComputeBound($L$, $\phi_0$)\label{algo:computebound}}
\end{algorithm}
\end{minipage}
\begin{minipage}{0.5\textwidth}
In other words, $f$ refines $g$ (relatively to $L$) if they have the same supremum over $L$ and $f$ maps more words (for the inclusion) onto $0$ than $g$.
The CEGAR loop for our quantitative setting is shown in Algorithm~\ref{algo:computebound}.

We present the algorithm in its full generality.
Line~\ref{alg:updaten} leaves some room for various implementations, as we will see in Section~\ref{sec:cegar_implementation}.
\end{minipage}

The key of this algorithm is the search for a word $u$ such that $\sembar{\phi}(u) > 0$ for a \cltlbar{} formula $\phi$ (line~\ref{alg:keypoint}).
Note that if $\sup_L \sembar{\phi} > 0$, then for all $u \in L$, $\sembar{\phi}(u) > 0$ iff there is some $p$ such that $(u,p) \modelssup \phi$.
Considering the semantics of \cltlbar{}, this is equivalent to finding a word satisfying the \ltl{} formula $\phi'$, a copy of $\phi$ in which every occurrence of $\phi_1 \ltlr{}^{>} \phi_2$ is replaced by $\top \ltlr{} \phi_2$ (where $\top = a \vee \neg a$ for any $a \in AP$).
Therefore, the search for the upper bound of $\sembar{\phi}$ is reduced to \ltl{} emptiness check, a well-studied problem with numerous efficient solutions (see~\cite{rozier2010ltl, renault2013scc} for surveys).
The corner case $\sup_L \sembar{\phi} = 0$ can be detected at the second pass in the loop (see the proof of Proposition~\ref{prop:correctness}).

\begin{proposition}
\label{prop:cegar_equality}
At line~\ref{alg:keypoint}, for all $u \in L$, 
$\sembar{\phi}(u) =
\begin{cases}
  \sembar{\phi_0}(u)  &\text{if } \sembar{\phi_0}(u) > n \\
  0                   &\text{otherwise}
\end{cases}$
\end{proposition}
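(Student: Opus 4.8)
The plan is to evaluate $\sembar{\phi}(u)$ for the formula $\phi = \phi_0 \wedge \phi_0[n+1]$ assigned on line~\ref{alg:phi0}, by treating the conjunction and the instantiation $\phi_0[n+1]$ separately. First note that $\phi$ is a genuine \cltlbar{} formula: $\phi_0 \in \cltlbar{}$ by hypothesis, and $\phi_0[n+1]$ is a plain \ltl{} formula, since the $[\cdot]$ translation eliminates the counting operator; hence $\phi_0[n+1]$ is also in \cltlbar{}, and so is $\phi$.

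Step one is the general identity $\sembar{\psi_1 \wedge \psi_2}(u) = \min\bigl(\sembar{\psi_1}(u), \sembar{\psi_2}(u)\bigr)$ for any $\psi_1, \psi_2 \in \cltlbar{}$, with $\min$ taken in $\mathds{N} \cup \{\infty\}$. For this, write $S_i(u) = \{ m \in \mathds{N} \mid (u,m) \modelssup \psi_i \}$; by the monotonicity of $\modelssup$ recalled above each $S_i(u)$ is downward closed, and the semantics of $\wedge$ gives $S_{\psi_1 \wedge \psi_2}(u) = S_1(u) \cap S_2(u)$, which is again downward closed. The supremum of an intersection of two downward-closed subsets of $\mathds{N}$ equals the minimum of their two suprema; the one point needing care here is the boundary case where some $S_i(u) = \emptyset$, in which case the intersection is also empty and both sides equal $0$ under the convention $\sup \emptyset = 0$.

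Step two is to evaluate $\sembar{\phi_0[n+1]}(u)$. Since $\phi_0[n+1] \in \ltl{}$, and \ltl{} embeds in \cltlbar{} with \texttt{true} $\mapsto \infty$ and \texttt{false} $\mapsto 0$, we get $\sembar{\phi_0[n+1]}(u) = \infty$ if $u \vdash \phi_0[n+1]$ and $0$ otherwise; by Proposition~\ref{prop:sup_instantiation} the first case occurs exactly when $\sembar{\phi_0}(u) \geq n+1$, i.e. when $\sembar{\phi_0}(u) > n$.

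Step three combines the two: $\sembar{\phi}(u) = \min\bigl(\sembar{\phi_0}(u), \sembar{\phi_0[n+1]}(u)\bigr)$. If $\sembar{\phi_0}(u) > n$ the second argument is $\infty$, so the minimum is $\sembar{\phi_0}(u)$; otherwise the second argument is $0$, so the minimum is $0$. This is precisely the claimed case split. I do not expect a serious obstacle: the only delicate points are the boundary behaviour of $\sup$ and $\min$ at $0$ when justifying the conjunction identity, and being careful to read every sub-formula — in particular $\phi_0[n+1]$ — under the \cltlbar{} semantics, since $\phi$ is evaluated via $\sembar{\cdot}$.
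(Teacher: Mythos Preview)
Your proof is correct and follows essentially the same approach as the paper's: both unfold the semantics of $\phi = \phi_0 \wedge \phi_0[n+1]$, use that $\phi_0[n+1]$ is an \ltl{} formula whose truth (via Proposition~\ref{prop:sup_instantiation}) is equivalent to $\sembar{\phi_0}(u) > n$, and conclude by the resulting case split. Your version is slightly more explicit in isolating the general identity $\sembar{\psi_1 \wedge \psi_2}(u) = \min\bigl(\sembar{\psi_1}(u), \sembar{\psi_2}(u)\bigr)$, whereas the paper applies the semantics of $\wedge$ directly, but the underlying argument is the same.
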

\begin{proof}
%At the first pass in the loop, $\phi = \phi_0$ and the property is true.
%Afterwards, 
At line~\ref{alg:keypoint}, $\phi = \phi_0 \wedge \phi_0[n+1]$.
Let $u \in L$ and $m = \sembar{\phi_0}(u)$.
By Proposition~\ref{prop:sup_instantiation}, $m > n$ iff $(u,m) \modelssup \phi_0[n+1]$.
$\sembar{\phi}(u)$ is the largest $p$ such that both $(u,p) \modelssup \phi_0$ and $(u,p) \modelssup \phi_0[n+1]$.
If $m > n$, $m$ is the largest such $p$, so that $\sembar{\phi}(u) = m$.
Otherwise, there are no value complying to the latter condition, and $\sembar{\phi}(u) = \sup \emptyset = 0$.
\end{proof}

Proposition~\ref{prop:cegar_equality} proves that $\sembar{\phi} \preceq_L \sembar{\phi_0}$, and that at each pass in the loop, $\phi$ is refined with respect to $\preceq_L$.
%We now state the correctness and soundness of our algorithm if the sought bound if finite.

\begin{proposition}
\label{prop:correctness}
If $\sup_{L} \sembar{\phi_0}$ is finite, \ComputeBound is both correct and sound, i.e. it terminates and returns $\sup_L \sembar{\phi_0}$.
\end{proposition}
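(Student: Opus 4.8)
The plan is to reduce the whole statement to a monotonicity argument on the loop variable $n$. Write $B := \sup_L \sembar{\phi_0}$; the hypothesis is exactly that $B \in \mathds{N}$. The first and only substantial step is to make the guard of line~\ref{alg:keypoint} explicit. When control reaches line~\ref{alg:keypoint} we have $\phi = \phi_0 \wedge \phi_0[n+1]$, so Proposition~\ref{prop:cegar_equality} tells us that for every $u \in L$, $\sembar{\phi}(u) = \sembar{\phi_0}(u)$ if $\sembar{\phi_0}(u) > n$ and $\sembar{\phi}(u) = 0$ otherwise; since $n \geq 0$, this means $\sembar{\phi}(u) > 0$ precisely when $\sembar{\phi_0}(u) > n$. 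Hence the guard holds exactly when $\sup_L \sembar{\phi_0} > n$, i.e. when $n < B$ --- here one uses that a non-empty set of naturals bounded from above attains its supremum, so $\sup_L \sembar{\phi_0} > n$ does yield a witness $u \in L$ with $\sembar{\phi_0}(u) > n$. Moreover, whenever the guard holds the witness satisfies $\sembar{\phi}(u) = \sembar{\phi_0}(u)$, which is at once $> n$ (so the interval in line~\ref{alg:updaten} is non-empty and a legal $p$ exists) and $\leq B$ (since $u \in L$).

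With this reformulation, the rest is bookkeeping. First, a one-step induction gives the invariant $n \leq B$ at every visit to line~\ref{alg:keypoint}: it holds at entry, since $n = 0 \leq B$, and it is preserved because line~\ref{alg:updaten} replaces $n$ by some $p \leq \sembar{\phi}(u) = \sembar{\phi_0}(u) \leq B$. Termination follows immediately: each pass entering the then-branch strictly increases $n$ while keeping $n \leq B$, and $B$ is a fixed natural number, so the then-branch is taken at most $B$ times; after that the guard fails and \ComputeBound returns, which proves correctness. For soundness, \ComputeBound returns only from the else-branch, i.e. only when the guard fails, i.e. when $n \geq B$; together with the invariant $n \leq B$ this pins $n = B$, so the returned value is exactly $\sup_L \sembar{\phi_0}$. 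The degenerate case $B = 0$ requires no separate handling: the guard $n < B$ is false already at $n = 0$, so \ComputeBound returns $0$ on its very first pass --- and this is also what disposes of the corner case $\sup_L \sembar{\phi} = 0$ flagged just before the statement.

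I do not expect a genuine obstacle: the conceptual crux is the translation of the semantic test $\exists u \in L : \sembar{\phi}(u) > 0$ into the arithmetic test $n < \sup_L \sembar{\phi_0}$, and that is precisely Proposition~\ref{prop:cegar_equality}, which in turn rests on the instantiation results (Propositions~\ref{prop:unfolding} and~\ref{prop:sup_instantiation}) relating $\sembar{\phi_0}(u) \geq m$ with $u \vdash \phi_0[m]$. The only mildly delicate points are the attainment of the supremum used to extract a witness and the non-emptiness of the choice set at line~\ref{alg:updaten}, both settled in the first paragraph. Finally, the finiteness hypothesis on $\sup_L \sembar{\phi_0}$ is used exactly once but essentially --- it is what caps $n$ and hence forces termination; were $\sup_L \sembar{\phi_0} = \infty$, the then-branch could keep firing forever.
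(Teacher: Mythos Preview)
Your proof is correct and follows essentially the same approach as the paper's: use Proposition~\ref{prop:cegar_equality} to identify the guard on line~\ref{alg:keypoint} with the test $n < \sup_L \sembar{\phi_0}$, observe that $n$ strictly increases while remaining bounded by $\sup_L \sembar{\phi_0}$, and conclude termination and correctness. Your version is simply more explicit about the invariant $n \leq B$ and the non-emptiness of the choice set at line~\ref{alg:updaten}; the paper's proof leaves these implicit (``$n$ is obviously bounded by $\sup_L \sembar{\phi_0}$'') and defers the soundness of the search on line~\ref{alg:keypoint} to an external assumption.
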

\begin{proof}
At line~\ref{alg:updaten}, $n$ is updated with a value $p$ such that $n < p \leq \sembar{\phi_0}(u)$ for some $u \in L$ such that $\sembar{\phi}(u) > 0$.
Proposition~\ref{prop:cegar_equality} guarantees the existence of such a $p$, and $n$ strictly increases when updated.
$n$ is obviously bounded by $\sup_L \sembar{\phi_0}$, which proves termination.
Moreover, as long as $n < \sup_L \sembar{\phi_0}$, there are still words $u$ such that $\sembar{\phi_0}(u) > n$, i.e. $\sembar{\phi}(u) > 0$.
If the search for such words on line~\ref{alg:keypoint} is correct and sound, so is \ComputeBound.
\end{proof}

\subsection{Performance of the algorithm}

Essentially, \ComputeBound enumerates candidate values for $\sup \sembar{\phi_0}$ in increasing order until a fixpoint is reached.
The next candidate is determined on line~\ref{alg:updaten}: the larger $p$, the quicker the algorithm converges.
The choice left for $p$ allows flexibility: the exact value of $\sembar{\phi}(u)$ is most certainly harder to find than an appropriate value $p$.
This line brings a tuning parameter for implementations: the higher the $p$, the faster the convergence, but probably the higher the computation cost.
Implementations should therefore find an appropriate balance between the cost of computing $p$, and the number of loops in \ComputeBound.

To a lesser extent, line~\ref{alg:phi0} brings another tuning parameter for implementations.
$\phi_0$ (resp. $\phi_0[n+1]$) in this line can be safely replaced by $\phi$ (resp. $\phi[n+1]$), without affecting the outcome of the algorithm.
Nevertheless, using the $\phi_0$ variant yields simpler formulae.

%We present in Section~\ref{sec:cegar_implementation} how this CEGAR-loop can be implemented with $\omega$-automata equipped with counters to keep track of the quantitative information.

\section{Counter $\omega$-Automata}
\label{sec:automata}

This section presents Counter $\omega$-Automata (CA), as introduced by~\cite{bojcol:2006} under the names $B$-automata and $S$-automata.
We also show how to translate \cltlbar{} formulae to CA, based on ideas used for the case of finite words~\cite{kuperberg2014linear}.
We adapt it to infinite words in Section~\ref{sec:cltl_to_automata}, and then show the implementation of our \ComputeBound algorithm with CA in Section~\ref{sec:cegar_implementation}.

Informally, a CA is a $\omega$-automaton equipped with a finite set of non-negative integer counters $\Gamma$, initialized with value $0$.
The values of these counters are controlled by actions: \increment{} that increments a counter; \reset{} that resets a counter to $0$; \observe{} that \emph{observes}, or stores, the current value of the counter.
The set of counter actions is denoted by $\mathds{C}$.
Values of the counters do not affect the behavior of the automaton, but are used to assign a \emph{value} to a word.
Only observed values are used to determine word values.
In addition to a letter $a \in \Sigma$, a CA transition is labelled with $|\Gamma|$ (words of) actions, one for each counter.

\begin{definition}
\label{def:counter_automata}
A counter automaton is a $6$-tuple $\mathcal{A} = \tuple{Q, \Sigma, \Delta, \Gamma, q_0, \mathcal{F}}$ where:
        \begin{compactitem}[-]
          \item $Q$ is a finite set of states, and $q_0 \in Q$ is the initial state;
          \item $\Sigma$ is a finite alphabet;
          \item $\Gamma$ is a finite set of counters;
          \item $\Delta \subseteq Q \times \Sigma \times ({\mathbb{C}^*})^\Gamma \times Q$ is the transition relation;
          \item $\mathcal{F} \subseteq 2^{\Delta}$ is a set of sets of accepting transitions.
        \end{compactitem}  
        An infinite word $u \in \Sigma^\omega$ is accepted by a Counter Automaton $\mathcal{A}$ if there exists an execution of $\mathcal{A}$ on $u$ that visits infinitely often every set in $\mathcal{F}$.
\end{definition}
For $\Gamma = \emptyset$, Definition~\ref{def:counter_automata} defines a $\omega$-automaton.
Along a run $\rho$, counters are incremented and reset according to the encountered actions, and the set of checked values is noted $C(\rho)$.

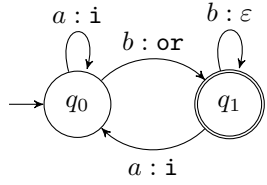
\begin{figure}
  \centering
  \begin{tikzpicture}[node distance=20mm, >=stealth', bend angle=45, auto]
    \node[state, initial, initial text=] (q0) {$q_0$};
    \node[state, accepting] (q1) [right of=q0] {$q_1$};

    \path (q0) edge[loop above] node {$a:\increment{}$} (q0)
          (q0) edge[->, bend left] node {$b:\checkreset{}$} (q1)
          (q1) edge[loop above] node {$b:\varepsilon$} (q1)
          (q1) edge[->, bend left] node {$a:\increment{}$} (q0);
  \end{tikzpicture}
  \caption{A CA counting consecutive $a$'s}
  \label{fig:simple-counter-automaton}
\end{figure}

There are two dual semantics for CA:
\begin{compactitem}
  \item the $\inf$-semantics ($B$-automata in~\cite{bojcol:2006}), where\\
  $\sem{\mathcal{A}}(u) = \inf_{\rho \text{ acc. run on } u} \sup C(\rho)$;
  \item the $\sup$-semantics ($S$-automata in~\cite{bojcol:2006}), where\\
  $\sembar{\mathcal{A}}(u) = \sup_{\rho \text{ acc. run on } u} \inf C(\rho)$.
\end{compactitem}

Figure~\ref{fig:simple-counter-automaton} gives an example of a deterministic CA with the $\sup$-semantics.
Only words in $L = (\Sigma^* b)^{\omega}$ have accepting runs.
Thus $\sembar{\mathcal{A}}(u) = \sup \emptyset = 0$ for $u \notin L$.
If $u \in L$, $\sembar{\mathcal{A}}(u)$ is the smallest size of a block of consecutive $a$'s in $u$.

\subsection{From CLTL to Counter Automata}
\label{sec:cltl_to_automata}

For every \cltl{} (resp. \cltlbar{}) formula $\phi$, there exists a CA $\mathcal{A}_{\phi}$ with the $\inf$-semantics (resp. $\sup$-semantics) with the same semantics: $\sem{\mathcal{A}} = \sem{\phi}$ (resp. $\sembar{\mathcal{A}} = \sembar{\phi}$).
This construction is effective, and does not differ much from the translation from \ltl{} formulae to B\"{u}chi automata (see for instance~\cite{demgas:2012}).
The key difference is the introduction of a counter for each occurrence of the operator $\ltlu{}^{\leq}$ (resp. $\ltlr{}^{>}$) in the formula to translate.
The translation is described in~\cite{kuperberg2014linear} for the case of finite words, and is easily extended to infinite words.

We state here this extension, for the sake of completeness.
In~\cite{kuperberg2014linear}, the produced CA transitions bear \emph{sequences} of counter actions (e.g. a counter can be incremented by three in a single transition).
We show that it is always possible to produce a CA whose transitions are labelled with \emph{atomic actions}, i.e. at most one action (\increment{}, \checkreset{} or $\varepsilon$) per counter.
%
%
%We also emphasize that, contrary to~\cite{kuperberg2014linear}, our translation needs not words of counter actions.
%As we will see, for each counter, only one action is performed in a sequence of $\varepsilon$-transitions.
%Thus transitions in the final automaton are labeled by tuples consisting of one \emph{atomic} action for each counter.
This possibility seems to have been overlooked in previous work.
This remark may stem from the care taken in our translation to retain exact values.
We also note that there is a slight difference of semantics for $\cltlbar{}$ with respect to~\cite{kuperberg2014linear}, which is the main cause of the differences between our algorithm and previous ones.
We will also discuss optimizations of the translation.

We label $\ltlr{}^{>}_1$, \dots, $\ltlr{}^{>}_k$ the $k$ occurrences of the operator $\ltlr{}^{>}$ in $\phi$.
Each occurrence is associated a counter, so that $\Gamma = \{ \gamma_1, \dots, \gamma_k \}$.
We note $sub(\phi)$ the set of sub-formulae of $\phi$.

\begin{wraptable}{L}{0.35\textwidth}
\scriptsize
\vspace{-10pt}
  \begin{align*}
    &\text{if } \psi = \psi_1 \land \psi_2 :&&
      {\left\{\begin{array}{l}
        Y \xrightarrow{\varepsilon:\varepsilon} Y \backslash \setof{\psi} \cup \setof{\psi_1, \psi_2}
      \end{array}\right.} \\
    &\text{if } \psi = \psi_1 \lor \psi_2 :&&
      {\left\{\begin{array}{l}
        Y \xrightarrow{\varepsilon:\varepsilon} Y \backslash \setof{\psi} \cup \setof{\psi_1} \\
        Y \xrightarrow{\varepsilon:\varepsilon} Y \backslash \setof{\psi} \cup \setof{\psi_2}
    \end{array}\right.} \\
    &\text{if } \psi = \psi_1 \ltlu{} \psi_2 :&&
      {\left\{\begin{array}{l}
        Y \xrightarrow{\varepsilon:\varepsilon} Y \backslash \setof{\psi} \cup \setof{\psi_2} \\
        Y \xrightarrow[!\psi]{\varepsilon:\varepsilon} Y \backslash \setof{\psi} \cup \setof{\psi_1, \ltlx\psi} \\
    \end{array}\right.} \\
    &\text{if } \psi = \psi_1 \ltlr{} \psi_2 :&&
      {\left\{\begin{array}{l}
        Y \xrightarrow{\varepsilon:\varepsilon} Y \backslash \setof{\psi} \cup \setof{\psi_1, \psi_2} \\
        Y \xrightarrow{\varepsilon:\varepsilon} Y \backslash \setof{\psi} \cup \setof{\psi_2, \ltlx\psi} \\
    \end{array}\right.} \\
    &\text{if } \psi = \psi_1 \ltlr{}^{>}_i \psi_2 :&&
      {\left\{\begin{array}{l}
        Y \xrightarrow{\varepsilon:\checkreset{}_i} Y \backslash \setof{\psi} \cup \setof{\psi_1, \psi_2} \\
        Y \xrightarrow{\varepsilon:\increment{}_i} Y \backslash \setof{\psi} \cup \setof{\psi_1, \psi_2, \ltlx\psi} \\
        Y \xrightarrow{\varepsilon:\varepsilon} Y \backslash \setof{\psi} \cup \setof{\psi_2,\ltlx\psi}
    \end{array}\right.}
  \end{align*}
  \caption{Reduction rules}
  \label{tab:epsilon-reduction}
  \vspace{-10pt}
\end{wraptable}

A state of $\mathcal{A}_{\phi}$ is a set of \cltlbar{} formulae, yet to be verified.
A formula is \emph{reduced} if it is either a literal or its outermost operator is $\ltlx{}$.
A set $Z$ of formulae is \emph{reduced} if it contains only reduced formulae, and \emph{consistent} if it does not contain both a formula and its negation.
Given a reduced and consistent set $Z$, we note $next(Z) = \{ \psi ~|~ \ltlx{} \psi \in Z \}$ and $\Sigma_Z$ the set of letters (in $2^{AP}$) compatible with the literals in $Z$.
$\Sigma_Z$ cannot be empty if $Z$ is consistent.
From a reduced state $Z = \{ l_1, \dots, l_n, \ltlx{} \phi_1, \dots, \ltlx{} \phi_p \}$, reading a letter of $\Sigma_Z$ leads to the state $next(Z) = \{ \phi_1, \dots, \phi_p \}$.

Non-reduced states are reduced step-by-step using $\varepsilon$-transitions, summarized in Table~\ref{tab:epsilon-reduction}, that preserve the state semantics.
Operators $\vee$, $\wedge$, $\ltlu$ and $\ltlr$ follow the classical translation from $\ltl{}$ to $\omega$-automata.
To reduce $\psi = \psi_1 \ltlr{}^{>}_i \psi_2$, three $\varepsilon$-transitions are possible:
\begin{compactitem}
  \item the first one checks and resets the counter $i$, requiring both $\psi_1$ and $\psi_2$ to be verified;
  \item the second one counts one occurrence of $\psi_1$, requiring all $\psi_1$, $\psi_2$ and $\ltlx{} \psi$ to be verified;
  \item the third one does nothing on the counter $i$, and requires both $\psi_2$ and $\ltlx{} \psi$ to be verified.
\end{compactitem}
These three transitions implement in fact the semantics of the operator $\ltlr{}^{>}$.

\begin{figure}
  \centering
  \begin{tikzpicture}[node distance=15mm, >=stealth', auto]
    \tikzstyle{state}=[rounded corners, draw=black!50, fill=white]
    \tikzstyle{pseudo}=[rounded corners, draw=black!50, fill=white, dashed]

    \node[state, initial, initial text=] (q000) {$\neg\phi = \ltlf(p \wedge \ltlg^> \neg q)$};
    \node[pseudo] (q010) [below of=q000] {$p, \ltlg^> \neg q$};
    \node[pseudo] (q020) [below of=q010] {$p, \neg q, \ltlx\ltlg^>\neg q$};
    \node[pseudo] (q011) [right of=q000, xshift=17mm] {$\ltlx\neg\phi$};
    \node[pseudo] (q021) [right of=q010, xshift=17mm] {$p, \neg q$};

    \path (q000) edge[->, dashed] node {$\varepsilon:\varepsilon$} (q010)
          (q010) edge[->, dashed] node {$\varepsilon:\increment{}$} (q020)
          (q010) edge[->, dashed] node {$\varepsilon:\checkreset{}$} (q021)
          (q000) edge[->, dashed] node {$\varepsilon:\varepsilon$} (q011);
  \end{tikzpicture}
  \caption{Reduction of $\setof{\neg\phi}$}
  \label{fig:epsilon}
\end{figure}
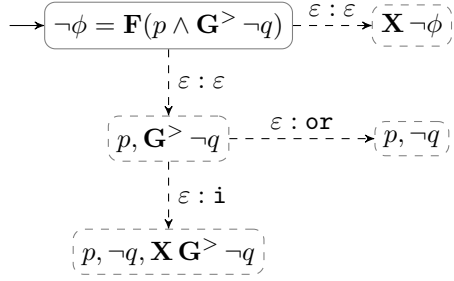

An until formula $\phi_1 \ltlu{} \phi_2$ requires $\phi_2$ to be true at some point.
Transitions subscripted with the label $! \psi$ indicate that $\phi_2$ in $\psi$ has been postponed.
Each until sub-formula in $\phi$ yields one acceptance condition: any transition going through a label $! \psi$ is \emph{not} accepting for the condition $\psi$.
Once the automata with $\varepsilon$-transitions is built, the actual (smaller) automaton is built by collapsing $\varepsilon$-transitions (counter actions are concatenated).

Let us illustrate the described translation with an example: $\phi = \ltlg(p \implies \ltlf^{\leq} q)$.% the $\cltl{}$ formula to translate.
We turn $\phi$ into an equivalent $\cltlbar{}$ formula $\neg \phi = \ltlf(p \wedge \ltlg^> \neg q)$, which we translate to a CA with $\sup$-semantics.
Figure~\ref{fig:epsilon} depicts the $\varepsilon$-transitions obtained while reducing $\setof{\neg\phi}$.

The reduction yields three reduced sets.
The (not reduced) set $\setof{p \wedge \ltlg^> \neg q}$ is not shown and is directly reduced to $\setof{p, \ltlg^> \neg q}$.
The three reduced sets being also consistent, we are ready to find the real successors of $\neg\phi$, \emph{i.e.} the sets $next(Z)$ where $Z$ is one of the three obtained reduced sets.
First, $next(\ltlx\neg\phi)$ falls back to the initial state $\setof{\neg\phi}$, which will result in a loop in the final automaton.
Note that this will be the only non-accepting transition.
$next(p, \neg q)$ is the set $\setof{\top}$, and $next(p, \neg q, \ltlx\ltlg^> \neg q)$ is $\setof{\ltlg^> \neg q}$.
As this last state is not reduced, the reduction process goes on, yielding $\setof{\neg q}$ and $\setof{\ltlx\ltlg^> \neg q}$.
$next(\neg q) = \setof{\top}$ and $next(\ltlx\ltlg^> \neg q) = \setof{\ltlg^> \neg q}$, states that have already been discovered and reduced.
Finally, we collapse the $\varepsilon$-transitions to get the final automaton shown in Figure~\ref{fig:automaton-final}.

\begin{figure}
\centering
\footnotesize
\subfloat[CA after removing pseudo states]{\label{fig:automaton-final:a}
\begin{tikzpicture}[->,node distance=16mm,auto,scale=0.9,initial text=]
\tikzstyle{every state}=[inner sep=0pt, minimum size=10pt]

\node[initial,state]  (1)                                 {};
\node[state]          (7)   [right=3cm of 1]              {};
\node[state]          (8)   [below right=2cm of 1]        {};

\path[]   (1)   edge [loop above]                               node          {$\top/\varepsilon$}                  (1);
\path[]   (1)   edge [bend left=15]   node[below=-3pt,bacc] {}  node          {$a \wedge \neg b/\checkreset{}$}     (7);
\path[]   (7)   edge [loop right]     node[below=-3pt,bacc] {}  node          {$\top/\varepsilon$}                  (7);
\path[]   (1)   edge [bend left=20]   node[below=-3pt,bacc] {}  node          {$a \wedge \neg b/\varepsilon$}       (8);
\path[]   (1)   edge [bend right=20]  node[below=-3pt,bacc] {}  node [below]  {$a \wedge \neg b/\increment{}~~~~~~~~$}  (8);
\path[]   (8)   edge                  node[below=-3pt,bacc] {}  node [below]  {$~~~~~~\neg b/\checkreset{}$}        (7);
\path[]   (8)   edge [loop right]     node[below=-3pt,bacc] {}  node          {$\neg b/\varepsilon$}                (8);
\path[]   (8)   edge [loop below]     node[below=-3pt,bacc] {}  node          {$\neg b/\increment{}$}               (8);

\end{tikzpicture}
}
\subfloat[CA without unnecessary transitions]{\label{fig:automaton-final:b}
\begin{tikzpicture}[->,node distance=16mm,auto,scale=0.9,initial text=]
\tikzstyle{every state}=[inner sep=0pt, minimum size=10pt]

\node[initial,state]  (1)                                 {};
\node[state]          (7)   [right=3cm of 1]              {};
\node[state]          (8)   [below right=2cm of 1]        {};

\path[]   (1)   edge [loop above]                               node          {$\top/\varepsilon$}                  (1);
\path[]   (1)   edge [bend left=15]   node[below=-3pt,bacc] {}  node          {$a \wedge \neg b/\checkreset{}$}     (7);
\path[]   (7)   edge [loop right]     node[below=-3pt,bacc] {}  node          {$\top/\varepsilon$}                  (7);
\path[]   (1)   edge [bend right=20]  node[below=-3pt,bacc] {}  node [below]  {$a \wedge \neg b/\increment{}~~~~~~~~$}  (8);
\path[]   (8)   edge                  node[below=-3pt,bacc] {}  node [below]  {$~~~~~~\neg b/\checkreset{}$}        (7);
\path[]   (8)   edge [loop below]     node[below=-3pt,bacc] {}  node          {$\neg b/\increment{}$}               (8);
\end{tikzpicture}
}
\caption{CA for $\ltlg(p \implies \ltlf^{\leq} q)$}
\label{fig:automaton-final}
\end{figure}

Several transitions of the automaton of Figure~\ref{fig:automaton-final:a} are unnecessary.
Indeed, according to the $\sup$-semantics, only paths with the higher value are relevant, those with a lower value can safely be removed.
This allows to reduce non-determinism in the automaton, as shown on the automaton of Figure~\ref{fig:automaton-final:b} which has the same semantics as the one of Figure~\ref{fig:automaton-final:a}.

If done appropriately, the actions in the produced automaton can be limited to atomic ones. The proof of Proposition~\ref{prop:atomic_props} is detailed in appendix.
\begin{proposition}
\label{prop:atomic_props}
If the largest (for the sub-formula ordering) formula in $Y$ is picked first when reducing $Y$, then at most one action per counter occurs along any chain of $\varepsilon$-transitions.
\end{proposition}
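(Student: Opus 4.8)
The plan is to argue counter by counter. Fix a counter $\gamma_i \in \Gamma$; by construction it is attached to the $i$-th occurrence of $\ltlr^{>}$ in $\phi$, so there is a \emph{unique} sub-formula $\theta_i = \psi_1 \ltlr^{>}_i \psi_2$ of $\phi$ whose outermost connective is that occurrence. Reading off Table~\ref{tab:epsilon-reduction}, the only $\varepsilon$-transitions carrying an action on $\gamma_i$ are the three transitions produced when reducing a formula whose root is $\ltlr^{>}_i$; and among the formulae that can ever be reduced (i.e.\ that are \emph{active}: neither a literal nor $\ltlx$-headed) inside a chain of $\varepsilon$-transitions, the only one with root $\ltlr^{>}_i$ is $\theta_i$ itself, since anything of the form $\ltlx\theta_i$ is reduced and hence inert until a letter is read. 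So it is enough to prove that along a single chain of $\varepsilon$-transitions the formula $\theta_i$ is reduced \emph{at most once}: that pins the number of actions on $\gamma_i$ to at most one (one of $\checkreset_i$, $\increment_i$, $\varepsilon$), which is exactly atomicity for $\gamma_i$.

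First I would isolate a one-line observation from Table~\ref{tab:epsilon-reduction}: reducing an active formula $\mu$ in a state $Y$ deletes $\mu$ and inserts, as \emph{new active} formulae, only \emph{proper sub-formulae of $\mu$}; the one other possible newcomer is $\ltlx\mu$, which is reduced and therefore inert inside the chain. In particular no reduction step ever introduces an active strict super-formula of an existing formula — and, as a byproduct, the multiset of active formulae of a state strictly decreases in the multiset order induced by the sub-formula ordering, so every chain is finite and ``at most once along the chain'' is meaningful.

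The heart of the argument is an invariant that I would start at the moment $\theta_i$ is first reduced. Because the strategy always reduces a formula that is maximal for the sub-formula ordering among the active formulae of the current state, at the step where $\theta_i$ is picked there is no active strict super-formula of $\theta_i$ present. I claim that the property ``$\theta_i \notin Y$ and $Y$ contains no active strict super-formula of $\theta_i$'' then holds for every state $Y$ for the rest of the chain, by induction on the remaining reduction steps: if it holds at $Y$ and the next step reduces some $\mu \in Y$, then $\mu$ is active and, by the induction hypothesis, $\mu$ is neither $\theta_i$ nor a strict super-formula of $\theta_i$; by the observation above $\mu$ is replaced only by proper sub-formulae of $\mu$ together with the inert $\ltlx\mu$, and none of these can be $\theta_i$ or a strict super-formula of $\theta_i$, since that would force $\theta_i < \mu$, contradicting the choice of $\mu$. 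Hence the property survives, $\theta_i$ never reappears once reduced, and it is reduced at most once.

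I expect the main obstacle to be the bookkeeping in that inductive step — in particular making explicit that it is precisely the ``largest-first'' discipline that forbids reducing $\theta_i$ while a super-formula capable of regenerating it (e.g.\ via a conjunct of the form $\cdots\theta_i\cdots$) is still pending, and checking rule by rule in Table~\ref{tab:epsilon-reduction} that the successor sets really consist only of proper sub-formulae plus an $\ltlx$-guarded, hence inert, copy of the reduced formula. The remaining ingredients — the reduction to a single counter, finiteness of chains, and the count of actions per reduction of $\theta_i$ — are routine.
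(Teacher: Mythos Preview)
Your proposal is correct and follows essentially the same route as the paper's proof: both reduce the claim to showing that the unique formula $\theta_i$ rooted at $\ltlr^{>}_i$ is reduced at most once along the chain, and both derive this from the observation that reductions only introduce strict sub-formulae (plus inert $\ltlx$-guarded copies), so that the largest-first discipline guarantees no active super-formula of $\theta_i$ is present when $\theta_i$ is picked. Your version is more carefully structured---you state an explicit invariant and carry it by induction, and you note the multiset-order argument for finiteness of chains---whereas the paper argues the same point by a short contradiction (if $\theta_i$ were regenerated by some $\psi'$, trace $\psi'$ back to an active super-formula present when $\theta_i$ was first reduced); but the underlying idea is identical.
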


\subsection{CEGAR-loop Implementation}
\label{sec:cegar_implementation}

We detail the implementation of \ComputeBound using CA, specifically lines~\ref{alg:keypoint} and~\ref{alg:updaten}.
The strength of our algorithm is to boil the problem down to $\omega$-automata emptiness checks, a well-studied problem, with numerous variants and solutions~\cite{rozier2010ltl, renault2013scc}.
This section makes no further assumptions on the variant of $\omega$-automata or the translation algorithm used, so that the final user can use the fittest ones.
Many translations of LTL (and by extension CLTL translation derived from them) to automata produce generalized transition-based automata.

The input language $L$ is assumed to be regular and given as an $\omega$-automaton.
As explained above, we build from $\phi$ a CA $\mathcal{A}_{\phi}$ such that $\sembar{\phi} = \sembar{\mathcal{A}_{\phi}}$.
Line~\ref{alg:keypoint} looks for a word $u \in L$ such that $\sembar{\phi}(u) > 0$.
The constraint $u \in L$ is enforce by searching $u$ such that $\sembar{\mathcal{A}_{\phi} \otimes L}(u) > 0$, where $\mathcal{A}_{\phi} \otimes L$ is the synchronized product of $\mathcal{A}_{\phi}$ and (the automaton of) $L$.
This product, itself a CA, rules out words not in $L$, so that $\sembar{\mathcal{A}_{\phi} \otimes L}(v) > 0$ iff $v \in L$.

Proposition~\ref{prop:cegar_equality} shows that $\sembar{\phi}(u) = 0$ iff there is no $n$ such that $(u,n) \models \phi$.
Given the requirements on $\mathcal{A}_{\phi}$, $\sembar{\phi}(u) = 0$ iff $\mathcal{A}_{\phi}$ has no accepting run on $u$.
Thus, the set of such $u$'s is exactly the language recognized by $\mathcal{A}_{\phi} \otimes L$, viewed as a $\omega$-automaton by ignoring the counters.
Finding such a word $u$ thus amounts to an emptiness check of the said automaton.

A non-empty regular $\omega$-language contains an ultimately periodic word, and so can be chosen $u$, ensuring a finite representation.
In practice, emptiness-check algorithms that compute a counter-example always produce such ultimately periodic words.

Line~\ref{alg:updaten} then asks for a value $p$ between $n$ and $\sembar{\phi_0}(u)$ to update $n$.
We claim that any accepting run $\rho$ on $u$ in the product automaton $\mathcal{A}_{\phi_0} \otimes \mathcal{A}_{\phi_0[n+1]}$ provides such a value $p$.
On the one hand such a $\rho$ is an accepting run in $\mathcal{A}_{\phi_0}$, and its value $p$ is therefore not larger that $\sembar{\phi_0}(u)$.
On the other hand, $\rho$ is also an accepting run in $\mathcal{A}_{\phi_0[n+1]}$.
The whole point of synchronizing $\mathcal{A}_{\phi_0}$ with $\mathcal{A}_{\phi_0[n+1]}$ is to rule out runs of value strictly less than $n$.
Indeed, in $\phi_0[n+1]$, the value $n+1$ is \emph{hard-coded} thanks to $n+1$ nested $\ltlx{}$ operators.
Every time a counter is incremented, a nested $\ltlx{}$ is passed, and $\mathcal{A}_{\phi_0[n+1]}$ accepts a run only if counters are checked with values strictly larger than $n$.
Therefore, replaying $u$ in $\mathcal{A}_{\phi_0} \otimes \mathcal{A}_{\phi_0[n+1]}$ yields a $p$ between $n$ and $\sembar{\phi_0}(u)$.
The great advantage of this operation is that only one run over $u$ needs to be considered, and the computation of $p$ is therefore straightforward.

We recall that higher $p$ speed the convergence of \ComputeBound, by reducing the number of loops.
But higher $p$ would require to explore several runs of $\mathcal{A}_{\phi_0} \otimes \mathcal{A}_{\phi_0[n+1]}$ and to retain the highest found value.
We see more precisely here the trade-off between the number of loops in \ComputeBound and the computation of $p$ on line~\ref{alg:updaten}.

To conclude, we show how \ComputeBound can be extended to also detect the unbounded case, thus providing a complete algorithm.
To this end, we recall that unboundedness of a $\sup$-automaton is decidable, as shown in~\cite{kuperberg2014linear}.
\begin{proposition}{\cite{kuperberg2014linear}}
\label{prop:unbounded}
$\sembar{\mathcal{A}}$ is unbounded if and only if $\mathcal{A}$ has an accepting run $\rho$ in which every action $cr_{\gamma}$ ($\gamma \in \Gamma$) is preceded by a cycle that increments $\gamma$ without resetting $\gamma$.
\end{proposition}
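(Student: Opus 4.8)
The plan is to prove both directions by analyzing the finitely many "shapes" of accepting runs, using a pumping-style argument on cycles. Recall that in the $\sup$-semantics the value of a run $\rho$ is $\inf C(\rho)$, the smallest observed value among all $\checkreset$ actions along $\rho$, and $\sembar{\mathcal{A}}$ is the supremum over accepting runs. So $\sembar{\mathcal{A}}$ is unbounded iff for every $N$ there is an accepting run all of whose observed values exceed $N$.

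For the \emph{if} direction, suppose $\mathcal{A}$ has an accepting run $\rho$ in which every $\checkreset_\gamma$ action is preceded (somewhere earlier on the run, since its last reset of $\gamma$) by a cycle that increments $\gamma$ without resetting it. Since $\mathcal{A}$ is finite and $\rho$ is an infinite accepting run, $\rho$ decomposes into a lasso shape: a finite prefix followed by a repeated cycle that visits every set in $\mathcal{F}$. First I would argue that it suffices to treat the finitely many $\checkreset$ actions that can occur between two consecutive resets of the same counter; for each such occurrence, the hypothesis furnishes an incrementing-without-resetting cycle on $\gamma$ located between the last reset of $\gamma$ and that $\checkreset_\gamma$. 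The key step is then a pumping argument: given a target bound $N$, I would insert each such incrementing cycle $N$ extra times (being careful that pumping one cycle does not disturb the counter behavior guaranteed for another — this is where the "without resetting $\gamma$" clause is essential, since inserting the cycle cannot reset $\gamma$, and inserting it does not affect counters other than $\gamma$ in a way that lowers their observed values, as increments only raise values). The pumped run is still accepting (it visits the same $\mathcal{F}$-sets infinitely often, since the periodic part is preserved or itself pumped), and every observed value is now $> N$. Letting $N \to \infty$ gives unboundedness.

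For the \emph{only if} direction, assume $\sembar{\mathcal{A}}$ is unbounded, and take $N$ larger than $|Q| \cdot (|Q|+1)$ or a similar Ramsey-type threshold; pick an accepting run $\rho$ of value $> N$. I would then show that, along $\rho$, between the last reset of a counter $\gamma$ and any $\checkreset_\gamma$ action, the counter $\gamma$ must have been incremented more than $|Q|$ times (since its observed value exceeds $N > |Q|$ and it was not reset in between). By the pigeonhole principle, two of these increments occur at the same control state, so the infix between them is a cycle; and since $\gamma$ is not reset between the relevant reset and the $\checkreset_\gamma$, this cycle increments $\gamma$ without resetting it — exactly the required configuration. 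Doing this uniformly for every $\checkreset$ action along $\rho$ (there may be infinitely many, but only finitely many distinct "between consecutive resets" patterns up to the lasso structure) yields a run of the claimed form; one may then fold $\rho$ back into a lasso witnessing the property.

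The main obstacle will be the bookkeeping in the pumping argument: ensuring that inserting an incrementing cycle for one counter $\gamma$ does not destroy, for some other counter $\gamma'$, the property that all its observed values stay above the target — and conversely, that simultaneously boosting all counters can be done with a single coherent run rather than one run per counter. The clean way around this is to observe that counter actions are independent per counter and that increments are monotone (they never decrease a future observed value), so cycles can be inserted one counter at a time without interference; the delicate point is merely to keep the run accepting, which is handled by placing all insertions inside the transient prefix (or inside the periodic part, pumped consistently) so the infinitely-often visited sets in $\mathcal{F}$ are untouched. I would also need to be slightly careful that "preceded by a cycle" in the statement is read relative to the most recent reset of $\gamma$, not the start of the run; once that reading is fixed, the equivalence is a routine, if fiddly, finite-automata pumping argument.
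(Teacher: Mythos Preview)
Your plan follows the same two-direction skeleton the paper sketches: pump the incrementing cycles for the ``if'' direction, and use a pigeonhole argument on a high-value run for the ``only if'' direction (the paper phrases the latter contrapositively: if no run of the required shape exists, every accepting run has value at most $|Q_{\mathcal{A}}|$). The paper itself offers only this two-sentence intuition and defers to~\cite{kuperberg2014linear} for the details, so you are already writing more than the paper does. Two points nevertheless deserve attention.

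For the ``only if'' direction, your threshold is larger than needed. Taking $N = |Q_{\mathcal{A}}|$ already suffices: if a $\checkreset_\gamma$ observes a value exceeding $|Q_{\mathcal{A}}|$, then between the previous reset of $\gamma$ and this observation there are more than $|Q_{\mathcal{A}}|$ increment transitions for $\gamma$, hence a repeated state among their sources, hence the required cycle. This tight bound is exactly what the paper exploits in the corollary immediately following the proposition (an accepting run of value larger than the number of states of the product witnesses unboundedness), so loosening it to $|Q|\cdot(|Q|+1)$ would weaken that application.

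For the ``if'' direction, your proposed resolution of the interference obstacle does not go through. You write that inserting extra copies of a cycle $C$ chosen for counter $\gamma$ ``does not affect counters other than $\gamma$ in a way that lowers their observed values, as increments only raise values''. But the hypothesis only guarantees that $C$ does not reset $\gamma$; nothing prevents $C$ from containing a $\checkreset_{\gamma'}$ for some other counter $\gamma'$. Pumping $C$ then manufactures fresh $\checkreset_{\gamma'}$ occurrences whose observed value is determined entirely by what happens inside a single copy of $C$, and this can stay small no matter how much you pump the $\gamma'$-cycle elsewhere. Independent per-counter pumping therefore fails in general; a correct construction must interleave the pumped cycles (e.g.\ reinserting the $\gamma'$-cycle between successive copies of $C$) or proceed by a more global argument. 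The paper does not address this either, but since you chose to spell out the bookkeeping, this is the step where your elaboration actually breaks.
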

The proof of Proposition~\ref{prop:unbounded} is not difficult: the existence of such a cycle guarantees the ability to build runs with arbitrarily high values.
Conversely, if no such run exists, then every accepting run has its value bounded by $|Q_{\mathcal{A}}|$.

As a corollary of Proposition~\ref{prop:unbounded}, $\sup \sembar{\mathcal{A}}$ is unbounded if and only if $\mathcal{A}$ has an accepting run of value greater than its number of states.
\ComputeBound can thus be adapted so as to detect unboundedness too: compute a bound $B$ on the size of the product $\mathcal{A}_{\phi_0} \otimes \mathcal{A}_L$ (such as $B = |\mathcal{A}_{\phi_0}| \times |\mathcal{A}_L|$).
The sought bound is finite iff $n$ ever exceeds $B$.

%\begin{numdef}
%For $\phi \in \cltlbar{}$, let $\alpha(\phi) \in \ltl{}$ be a copy of $\phi$ where each occurrence $\phi_1 \ltlr{}^{>} \phi_2$ is replaced by $\top \ltlr{} \phi_2$.
%\end{numdef}
%
%\begin{numprop}
%For all $u \in (2^{AP})^{\omega}$, there exists $n \in \mathds{N}$ s.t. $(u,n) \modelssup \phi$ if, and only if, $u \vdash \alpha(\phi)$.
%\end{numprop}
%\begin{proof}
%The proof is by structural induction on $\phi$.
%All cases are trivial, except $\phi = \phi_1 \ltlr{}^{>} \phi_2$.
%$\alpha(\phi)$ is then $\top \ltlr{} \alpha(\phi_2)$.
%If there exists a $n$ such that $(u,n) \modelssup \phi$, then
%%for every $i \in \mathds{N}$, either $(u^i,n) \modelssup \phi_2$ or $S_i = \{ j < i | (u,n) \modelssup \phi_1 \}$ has at least $n+1$ elements.
%in particular, $(u,n) \modelssup \phi_2$.
%By induction hypothesis $u \vdash \alpha(\phi_2)$, which suffices to prove $u \vdash \alpha(\phi)$.\\
%Conversely, if there is no $n$ such that $(u,n) \modelssup \phi$.
%In particular, for $n = 0$, there exists an index $i$ such that $(u^i,0) \not\modelssup \phi_2$.
%It implies that, for every $p \in \mathds{N}$, $(u^i,p) \not\modelssup \phi_2$, and $(u^i,p) \not\vdash \alpha(\phi_2)$ by induction hypothesis.
%In particular, $(u,i+1) \not \modelssup \alpha(\phi)$, which suffices to prove that $u \not\vdash \alpha(\phi)$.
%\end{proof}

\section{An example of Application}
  
Through a concrete example, this section illustrates the expressive capabilities of \cltl{} and \cltlbar{} and the kind of problems our bound evaluation algorithm may solve.
%  We use as example a distributed leader election algorithm, and quantitative properties thereof.

\subsection{Ant Colony Optimization}

Ant Colony Optimization\cite{dorigo1997antcolony} (ACO) is a bio-inspired meta-heuristic relying on the cooperative behavior of small simple agents to solve optimization problems.
A collection of artificial \emph{ants} endlessly walk a graph randomly, from some initial node (their \emph{nest}), to one or several target nodes (the \emph{sources of food)}, and come back to the nest.
Whenever an ant moves from one node to another, it deposits a certain amount of pheromone.
The quantity of pheromone left on an edge increases the likelihood that an ant chooses to cross it.
Besides, the quantity of pheromone decreases according to an \emph{evaporation rate}.
Unless the evaporation rate is too high, ants will eventually converge to the shortest paths from their nest to the food sources, because shorter paths will be rewarded with new pheromone more frequently.

% In a variant, an ant deposits the pheromones only once it has completed its journey, to reduce the likelihood to get stuck in a local optimum.
% Another variant limits the lifespan of the ants, to avoid their wandering too long or too far without finding a source of food.
% A \emph{birth rate} defines how often new ants spawn at the nest to compensate the casualties.
%
ACO has been successfully used in numerous applications, such as data mining~\cite{parpinelli02datamining}, image processing~\cite{nezamabadi06edgedetection}.
ACO is resilient to modifications of the graph and it usually responds very quickly to such changes because its current state is likely to contain useful information on the closest new solutions.
Finally, ACO is rather simple to implement on huge distributed setups, as agents do not communicate directly with each others.

\subsection{Quantitative Properties}

Let us consider an ACO that searches the shortest path between two nodes in a directed graph.
The most classic quantitative information is the time (number of steps) taken to find a solution, be it local or global.
Topological parameters may also be measured, such as the maximum length of solutions or the number of nodes visited before a solution is found.
Other quantities relate to the algorithm parameters, such as the maximum amount of pheromones on an edge.
Such information is critical to tune algorithm parameters, that ultimately dictate how fast it converges to a solution\cite{gaertner05onoptimal}.
With a fixed topology, some of these properties are not difficult to compute.
For instance, the minimum number of visited nodes is the length of the shortest path, computable in polynomial time.
Other properties are harder to compute, such as the maximum length of solutions.
When the topology dynamically changes, analytical search for exact optima values is cumbersome, if even possible.

To address these questions, model-checking becomes an option, by checking all possible behaviors of the system.
A common approach instruments the model to monitor the quantitative properties at stake.
It introduces a strong semantical risk, because instrumentation may be impacted by any modification to the model, and must thus be kept up-to-date.
We propose to move the instrumentation into the logics, to keep a proper separation between the actual behavior (the model) and the desired behavior (the logical property).

%In the following, we give examples of quantitative properties that can be expressed thanks to LTL$^\le$ and LTL$^>$, and computed with our CEGAR-like algorithm.

Let $G=\tuple{V,E}$ be a directed graph where $V$ is a finite set of vertices and $E \subseteq V \times V$ is the set of edges.
An ant is a pair $\tuple{a_v,a_d}$ where $a_v \in V$ is a node, $a_d \in \setof{\Uparrow, \Downarrow}$ is a direction (looking for a food source, and coming back to the nest). $A$ denotes the set of ants.

The time an ant $a$ takes to find a solution is given by
$\phi_{a}(a) = \ltlg (\bot \ltlu{}^{\leq} (a_v = s \land a_d = \Downarrow))$ where $s \in V$ is the nest node.
The worst-case over possible behaviors in $L$ is thus $\sup_L \sem{\phi_{a}}$ and the best-case $\inf_L \sem{\phi_{a}}$.
Similarly, the time taken by the \emph{whole system} to find a solution is obtained by the conjunction over all ants of the previous: $ \phi_{A} = \ltlg \bigwedge_{a \in A} \phi_{a}(a) $.

It is easy to count events like the number of visits of an ant $a$ to a node $s \in V$ with $(a_v = s) \ltlu{}^{\leq} (\ltlg{} \lnot(a_v = s))$.
Occurrences of a position where a \ltl{} formula $\phi$ holds are counted by $\phi \ltlu^{\leq} (\ltlg{} \lnot \phi)$.
Consider the deposit (resp. removal) of a pheromone on edge $e$, denoted by action $\mathtt{add}(e)$ (resp. $\mathtt{rm}(e)$).
The formula $ \phi_{\mathtt{acc}}(e) \equiv \lnot(\mathtt{add}(e) \implies (\lnot\mathtt{add}(e) \ltlu \mathtt{rm}(e))) $ holds in states where $e$ will receive more pheromone before the next removal, i.e. when an ant crosses an edge whose pheromones have not yet evaporated.
The (integer) amount of pheromone on a given edge $e$ is obtained by $\phi_{\mathtt{acc}}(e) \ltlu^{\leq} (\ltlg{} \lnot \phi_{\mathtt{acc}}(e))$.
  
\section{Related work}
\label{sec:related}

A famous problem in language theory is the star-height problem: given a language $L$ (of finite words) and an integer $k$, is there a regular expression for $L$ with at most $k$ nested Kleene stars?
Proposed in 1963~\cite{eggan1963transition}, it was proven decidable in 1988~\cite{hashiguchi1988algorithms} by exhibiting an algorithm with non-elementary complexity, and a much more efficient algorithm was then proposed in 2005~\cite{kirsten2005distance}.
Both algorithms translate the problem to the existence of a bound for a function mapping words to integers, represented in both cases by an automaton equipped with counters (distance automata for the former, nested distance desert automata for the latter).
This boundedness problem of the existence of a bound is then shown decidable. 
It is the first of many problems that reduce to the boundedness problem for such automata.

This motivated an in-depth study of automata with counters (as we use it) as a general framework, that came up with a theory extending the one of regular languages, with logical and algebraic counter-parts~\cite{colcombet2009theory}.
On infinite words, the logical counter-part motivated the introduction and study of \cltl{} and \cltlbar{}~\cite{kuperberg2012expressive}.
This theory also encompasses \emph{promptness} properties, a variant of liveness where a bound on the wait time of a recurring event must exist~\cite{kupferman2007liveness,almagor2010promptness}.
But all these works, motivated by the boundedness problem, overlook the exact values of the functions.
On one hand, this relaxation enables nice closure properties (such as the equivalent expressiveness for $\inf$-automata and $\sup$-automata).
On the other hand, it only allows to reason about the existence of a bound, not to compute values.

In verification, not all questions have a boolean answer, so that various quantitative extensions of automata have been considered, such as weighted automata (see~\cite{droste2007weighted} for a survey).
Despite their various domains of application, they have limited expressivity, as the domain of weights is required to be a semi-ring.
An extension to arbitrary operations on weights have been recently proposed~\cite{alur2013regular}.
It encompasses various extensions of weighted automata, such as Discounted Sum Automata~\cite{alfaro2003discounting} and Counter $\omega$-Automata as considered in this paper.
All these formalisms can be characterized by the absence of guards on register values.
These extensions sometimes have equivalent logics (such as discounted linear temporal logics~\cite{almagor2014discounting}).
From the logical point of view, let us also mention that other temporal logics able to count events were previously proposed~\cite{laroussinie2010counting}.

Most of the cited works only focus on expressivity, decidability and complexity problems, with little consideration to the practical use of such quantitative extensions of automata.
It contrasts with older formalisms: $\omega$-automata have already received great focus towards practical applications, illustrated by numerous emptiness checks algorithms (see~\cite{renault2013scc} for an overview) and many implementations, principally oriented towards \ltl{} model-checking (see~\cite{rozier2010ltl} for a survey).
Some quantitative extensions of automata possess a similar maturity towards practical applications, especially timed automata~\cite{behrmann2006uppaal} and weighted automata~\cite{prism2011}.

\section{Conclusion}
\label{sec:conclusion}

In this paper we proposed to use \cltl{} and \cltlbar{} for practical verification of quantitative properties. One key advantage of these logics is to clearly separate functional properties of the system and quantitative properties, expressed in the logic used for verification. The functional model can still be used for other tasks like production of code and test generation.
Along with examples of properties to be expressed with these logics, we also exhibit a CEGAR-like algorithm to compute bounds for such formulae, based on successive refinements.
We further proposed an implementation of this algorithm using automata equipped with counters, extending the automata approach used for \ltl{} model-checking.

This is a first step towards practical applications of such logics which seems very promising if adequate algorithms and tools are available.
The next step is to implement our algorithm in a proof-of-concept tool.
The logics we used are just a drop in a vast ocean of quantitative extensions for \ltl{}.
Further research should focus on fitting our algorithm in a more general framework so as to capture several such \ltl{} extensions.
Another axis would be the improvement of the performance of validation algorithms: such as improving the translation to automata to produce smaller and/or more deterministic automata and tweak emptiness checks to limit the number of refinement iterations.

\bibliography{concur-2015}

\begin{thebibliography}{10}

\bibitem{almagor2014discounting}
S.~Almagor, U.~Boker, and O.~Kupferman.
\newblock Discounting in {LTL}.
\newblock In E.~{\'A}brah{\'a}m and K.~Havelund, editors, {\em Tools and
  Algorithms for the Construction and Analysis of Systems}, volume 8413 of {\em
  LNCS}, pages 424--439. Springer Berlin Heidelberg, 2014.

\bibitem{almagor2010promptness}
S.~Almagor, Y.~Hirshfeld, and O.~Kupferman.
\newblock Promptness in $\omega$-regular automata.
\newblock In {\em Proc. 8th International Symposium on Automated Technology for
  Verification and Analysis (ATVA'10)}, volume 6252 of {\em LNCS}, pages
  22--36. Springer, 2010.

\bibitem{alur2013regular}
R.~Alur, L.~Dantoni, J.~Deshmukh, M.~Raghothaman, and Y.~Yuan.
\newblock {R}egular {F}unctions and {C}ost {R}egister {A}utomata.
\newblock In {\em Logic in Computer Science (LICS), 2013 28th Annual IEEE/ACM
  Symposium on}, pages 13--22. IEEE, 2013.

\bibitem{behrmann2006uppaal}
G.~Behrmann, A.~David, K.~Larsen, J.~Hakansson, P.~Petterson, W.~Yi, and
  M.~Hendriks.
\newblock Uppaal 4.0.
\newblock In {\em Proc. 3rd International Conference on the Quantitative
  Evaluation of Systems}, QEST '06, pages 125--126, Washington, DC, USA, 2006.
  IEEE Computer Society.

\bibitem{bojcol:2006}
M.~Boja\'{n}czyk and T.~Colcombet.
\newblock Bounds in $\omega$-regularity.
\newblock In {\em Proc. 21st Annual IEEE Symposium on Logic in Computer
  Science}, LICS '06, pages 285--296, Washington, DC, USA, 2006. IEEE Computer
  Society.

\bibitem{colcombet2009theory}
T.~Colcombet.
\newblock The theory of stabilisation monoids and regular cost functions.
\newblock In {\em Automata, languages and programming}, pages 139--150.
  Springer, 2009.

\bibitem{alfaro2003discounting}
L.~de~Alfaro, T.~Henzinger, and R.~Majumdar.
\newblock {D}iscounting the {F}uture in {S}ystems {T}heory.
\newblock In JosC.M. Baeten, J.~Lenstra, J.~Parrow, and G.. Woeginger, editors,
  {\em Automata, Languages and Programming}, volume 2719 of {\em Lecture Notes
  in Computer Science}, pages 1022--1037. Springer Berlin Heidelberg, 2003.

\bibitem{demgas:2012}
S.~Demri and P.~Gastin.
\newblock Specification and verification using temporal logics.
\newblock In D.~D'Souza and P.~Shankar, editors, {\em Modern applications of
  automata theory}, volume~2 of {\em IISc Research Monographs}, chapter~15,
  pages 457--494. World Scientific, July 2012.

\bibitem{dorigo1997antcolony}
M.~Dorigo and L.~Gambardella.
\newblock Ant colony system: A cooperative learning approach to the traveling
  salesman problem.
\newblock {\em IEEE Transactions on Evolutionary Computation}, 1997.

\bibitem{droste2007weighted}
M.~Droste and P.~Gastin.
\newblock {W}eighted {A}utomata and {W}eighted {L}ogics.
\newblock {\em Theoretical Computer Science}, 380(1):69--86, 2007.

\bibitem{eggan1963transition}
L.~C. Eggan.
\newblock Transition graphs and the star-height of regular events.
\newblock {\em Michigan Math. J.}, 10(4):385--397, 12 1963.

\bibitem{gaertner05onoptimal}
D.~Gaertner and K.~Clark.
\newblock On optimal parameters for ant colony optimization algorithms.
\newblock In {\em Proc. International Conference on Artificial Intelligence
  2005}, pages 83--89. CSREA Press, 2005.

\bibitem{hashiguchi1988algorithms}
K.~Hashiguchi.
\newblock Algorithms for determining relative star height and star height.
\newblock {\em Information and Computation}, 78(2):124 -- 169, 1988.

\bibitem{kirsten2005distance}
D.~Kirsten.
\newblock Distance desert automata and the star height problem.
\newblock {\em RAIRO-Theoretical Informatics and Applications},
  39(03):455--509, 2005.

\bibitem{kuperberg2014linear}
D.~Kuperberg.
\newblock Linear temporal logic for regular cost functions.
\newblock {\em Logical Methods in Computer Science}, 10(1), 2014.

\bibitem{kuperberg2012expressive}
D.~Kuperberg and M.~Vanden Boom.
\newblock On the expressive power of cost logics over infinite words.
\newblock In {\em Automata, Languages, and Programming}, pages 287--298.
  Springer, 2012.

\bibitem{kupferman2007liveness}
O.~Kupferman, N.~Piterman, and M.~Vardi.
\newblock From liveness to promptness.
\newblock In {\em Computer Aided Verification}, pages 406--419. Springer, 2007.

\bibitem{prism2011}
M.~Kwiatkowska, G.~Norman, and D.~Parker.
\newblock {PRISM} 4.0: {V}erification of {P}robabilistic {R}eal-time {S}ystems.
\newblock In G.~Gopalakrishnan and S.~Qadeer, editors, {\em Proc. 23rd
  International Conference on Computer Aided Verification (CAV'11)}, volume
  6806 of {\em LNCS}, pages 585--591. Springer, 2011.

\bibitem{laroussinie2010counting}
F.~Laroussinie, A.~Meyer, and E.~Petonnet.
\newblock {C}ounting {LTL}.
\newblock In {\em Proc. 2010 17th International Symposium on Temporal
  Representation and Reasoning}, TIME '10, pages 51--58, Washington, DC, USA,
  2010. IEEE Computer Society.

\bibitem{nezamabadi06edgedetection}
H.~Nezamabadi-pour, S.~Saryazdi, and E.~Rashedi.
\newblock Edge detection using ant algorithms.
\newblock {\em Soft Computing}, 10(7):623--628, 2006.

\bibitem{parpinelli02datamining}
R.. Parpinelli, H.~Lopes, and A.~Freitas.
\newblock Data mining with an ant colony optimization algorithm.
\newblock {\em IEEE Transactions on Evolutionary Computation}, 6:321--332,
  2002.

\bibitem{renault2013scc}
E.~Renault, A.~Duret-Lutz, F.~Kordon, and D.~Poitrenaud.
\newblock {T}hree {SCC}-based {E}mptiness {C}hecks for {G}eneralized {B\"u}chi
  {A}utomata.
\newblock In K.~McMillan, A.~Middeldorp, and A.~Voronkov, editors, {\em Proc.
  19th International Conference on Logic for Programming, Artificial
  Intelligence, and Reasoning (LPAR'13)}, volume 8312 of {\em LNCS}, pages
  668--682. Springer, 2013.

\bibitem{rozier2010ltl}
K.~Rozier and M.~Vardi.
\newblock {LTL} {S}atisfiability {C}hecking.
\newblock {\em International journal on software tools for technology
  transfer}, 12(2):123--137, 2010.

\end{thebibliography}

\clearpage
\appendix

\section{Proof of Property~\ref{prop:unfolding}}

\begin{proof}
The proof proceeds by structural induction on $\phi$.
Note that $\phi[n] = \phi$ for any $n$ if $\phi \in \ltl{}$, therefore the property holds on the \ltl{} fragment (and in particular on literals).\\
We recall that $\sem{\phi}(u) \leq n$ iff $(u,n) \modelsinf \phi$.\\
Let us assume that the property holds for \cltl{} formulae $\phi_1$ and $\phi_2$.
Let $\bowtie \in \{ \vee, \wedge, \ltlu{}, \ltlr{} \}$.
$(\phi_1 \bowtie \phi_2)[n] \vdash u$ iff $\phi_1[n] \bowtie \phi_2[n] \vdash u$.
The induction hypothesis allows to replace every occurrence of $\phi_i \vdash v$ in the usual \ltl{} semantics of $\bowtie$ by $(v,n) \modelsinf \phi_i$.
This gives the \cltl{} semantics of $\bowtie$, thus proving that if $(\phi_1 \bowtie \phi_2)[n] \vdash u$, then $(u,n) \modelsinf \phi_1 \bowtie \phi_2$.
The converse reasoning (from \cltl{} to \ltl{} semantics) proves the converse implication.
The same argument is applied to the case $\phi = \ltlx{} \phi_1$.

Now consider $\phi = \phi_1 \ltlu{}^{\leq} \phi_2$.
In the general case, $\phi[n] = (\phi_1[n] \ltlu{}^{\leq} \phi_2[n])[n]$.
Suppose we have a proof for this case when $\phi_1$ and $\phi_2$ are \ltl{} formulae.
The induction hypothesis allows to replace any occurrence of $\phi_1$ and $\phi_2$ in such a proof by $\phi_1[n]$ and $\phi_2[n]$, using the same argument as presented above.
Thus, it suffices to prove the property when $\phi_1$ and $\phi_2$ are \ltl{} formulae to conclude the proof.

Assume $\phi_1$ and $\phi_2$ are \ltl{} formulae.
We now proceed by induction on $n$.
If $n=0$, then $\phi[n] = \phi_1 \ltlu{} \phi_2$.
$u \vdash \phi[n]$ if, and only if, for some index $i$, $u^i \vdash \phi_2$ and $u^j \modelsinf \phi_1$ for every $j < i$.
In other words, $u \vdash \phi[n]$ if, and only if, for some index $i$, $u^i \vdash \phi_2$ and $\{ j < i ~|~ u^j \not\modelsinf \phi_1 \} = \emptyset$.
Thus, $u \vdash \phi[n]$ if, and only, if $(u,0) \modelsinf \phi$.\\
If the property holds at $n$, then $\phi[n+1] = (\phi_1 \vee \ltlx{}(\phi[n])) \ltlu{} \phi_2 = (\phi_1 \ltlu{} \phi_2) \vee (\ltlx{}(\phi[n]) \ltlu{} \phi_2)$.
If $u \vdash (\phi_1 \ltlu{} \phi_2)$, then $(u,0) \modelsinf \phi$ as above, or equivalently $\sem{\phi}(u) \leq 0 < n+1$.
If $u \vdash \ltlx{}(\phi[n]) \ltlu{} \phi_2$, then there is some index $i$ such that $u^i \vdash \phi_2$ and $u^{j+1} \vdash \phi[n]$ for every $j < i$.
Again by induction hypothesis, $(u^i,n+1) \modelsinf \phi_2$.
Let us now consider $S = \{ j < i | (u^j,n+1) \not\modelsinf \phi_1 \}$.
Let $j < i$.
We know that $u^{j+1} \vdash \phi[n]$ which is equivalent, by induction hypothesis, to $(u^{j+1},n) \modelsinf \phi$.
Therefore, there exists an index $i_j > j$ such that $(u^{i_j}, n) \modelsinf \phi_2$.
Since $(u^i,n+1) \modelsinf \phi_2$, we necessarily have $i_j \geq i$.
$S$ is a subset of $T = \{ j < i ~|~ (u^j,n) \not\modelsinf \phi_1 \}$.
We know that $T - \{ 0 \}$ is of size at most $n$ (in case $(u^k,n) \modelsinf \phi$ for every $i \leq k < i_0$).
Therefore, $S$ is of size at most $n+1$, which concludes the proof.
\end{proof}

\section{Proof of Property~\ref{prop:atomic_props}}
\begin{proof}
Consider a path of $\varepsilon$-transitions from a non-reduced state $Y'$ to a reduced state $Y$.
Whenever a formula $\psi$ is reduced along this path, it is removed from the current state ($\ltlx{} \psi$ may appear, but it cannot be reduced until after $Y'$), and only strict sub-formulae of $\psi$ are added.
We claim that the operator $\ltlr{}^{>}_i$ (that occurs only once in $\psi$) is reduced at most once along the path (for every $i$).
Indeed, when $\psi = \psi_1 \ltlr{}^{>}_i \psi_2$ is reduced once, the only way to have it reduced a second time is to be added to the current set by the reduction of another formula $\psi'$. It implies that $\psi$ is a sub-formula of $\psi'$. Since $\psi$ cannot be a sub-formula of one of its strict sub-formulae, then there was a non-reduced formula $\psi''$ with $\psi$ as strict sub-formula when $\psi$ was reduced, which contradicts the selection procedure of formulae to be reduced.
\end{proof}

\end{document}